\newtheorem{teo}{Theorem}
\newtheorem{pro}[teo]{Proposition}
\newtheorem{cor}[teo]{Corollary}
\newtheorem{lem}[teo]{Lemma}
\theoremstyle{definition}
\newtheorem{rem}[teo]{Remark}
\newtheorem{de}[teo]{Definition}
\DeclarePairedDelimiter\floor{\lfloor}{\rfloor}
\title[Quantum codes defined by trace-depending polynomials]{Stabilizer quantum codes defined by trace-depending polynomials}
\author{Carlos Galindo, Fernando Hernando, Helena Mart\'in-Cruz and Diego Ruano}
\curraddr{\texttt{Carlos Galindo and Fernando Hernando:} Instituto
Universitario de Matem\'aticas y Aplicaciones de Castell\'on andDepartamento de Matem\'aticas, Universitat Jaume I, Campus de Riu
Sec. 12071 Castell\'{o} (Spain)\\
\texttt{Helena Mart\'in-Cruz:} Instituto
Universitario de Matem\'aticas y Aplicaciones de Castell\'on, Universitat Jaume I, Campus de Riu
Sec. 12071 Castell\'{o} (Spain)\\
\texttt{Diego Ruano:} IMUVA-Mathematics Research Institute, Universidad de Valladolid, 47011 Valladolid (Spain).
}
\email{{\rm Galindo:} galindo@uji.es;  {\rm Hernando:} carrillf@uji.es; {\rm Mart\'in:} martinh@uji.es;  {\rm Ruano:} diego.ruano@uva.es}
\date{}
\thanks{Partially funded by MCIN/AEI/10.13039/501100011033 and by ``ERDF A way of making Europe", grants PGC2018-096446-B-C21 and PGC2018-096446-B-C22, by MCIN/AEI/10.13039/501100011033 and by ``ESF Investing in your future", grant RYC-2016-20208, as well as by Universitat Jaume I, grants UJI-B2021-02 and PREDOC/2020/39.}
\keywords{Quantum codes; trace; subfield-subcodes; cyclotomic cosets}
\begin{document}

\begin{abstract}
Quantum error-correcting codes with good parameters can be constructed by evaluating polynomials at the roots of the polynomial trace \cite{Traza}. In this paper, we propose to evaluate polynomials at the roots of trace-depending polynomials (given by a constant plus the trace of a polynomial) and show that this procedure gives rise to stabilizer quantum error-correcting codes with a wider range of lengths than in \cite{Traza} and with excellent parameters. Namely, we are able to provide new binary records according to \cite{codetables} and non-binary codes improving the ones available in the literature.
\end{abstract}

\maketitle

\section{Introduction}\label{se:uno}
\sloppy The existence of polynomial time algorithms for prime factorization and discrete logarithms on quantum computers  is a clear example that illustrates the importance of quantum computing \cite{22RBC}. Quantum computers are governed by the rules of quantum mechanics since they use subatomic particles to hold memory. Important obstacles for their reliability, like the loss of coherence and the fact that they produce more errors than the classical computers, are solved with quantum error-correcting codes \cite{23RBC, 95kkk}. Thus, quantum error-correction is a key tool in quantum computing, which works despite quantum information cannot be cloned \cite{8AS, 26RBC}. This explains why many researchers are looking for good quantum error-correcting codes.

Seminal papers on quantum error-correcting codes studied binary codes \cite{18kkk, Calderbank, Gottesman} (see also \cite{7kkk, 8kkk, 45kkk}). Later non-binary codes were also considered \cite{AK, Ketkar}; these last codes are particularly interesting in fault-tolerant computation \cite{FTShor, FTKnill, FTPres, FTGot, FTSte, FTCam, luol}. The literature on quantum error-correcting codes is very extensive (some references are \cite{BE,71kkk,lag2,QINP2,CaoCui,Song}). Many of the known quantum error-correcting codes are stabilizer codes. Denote by $\mathbb{C}$ the complex field, let $q$ be a prime power and $n$ a positive integer. A {\it stabilizer code} $\mathcal{C} \neq \{0\}$ is an intersection of eigenspaces (with respect to the eigenvalue $1$) running over the elements of an abelian subgroup of the error group generated by a suitable error basis on the space $\mathbb{C}^{q^n}$. When $\mathcal{C}$ is a $q^k$-dimensional subspace of $\mathbb{C}^{q^n}$ and has minimum distance $d$, we say that $\mathcal{C}$ is an $[[n,k,d]]_q$-code.  A main advantage of stabilizer quantum error-correcting codes is that they can be constructed  from additive codes included in $\mathbb{F}_q^{2n}$ which are self-orthogonal with respect to a trace symplectic form. As particular cases of the above construction, stabilizer codes can be obtained from suitable Hermitian or Euclidean self-orthogonal classical linear codes (see \cite{Ketkar} for details).

Quantum error-correcting codes that achieve the quantum Singleton bound are named quantum MDS codes. One can find many papers on this class of codes (see \cite{Fang, Ball, Liu-LiuX} to cite some articles from the last years). The length of a $q$-ary quantum MDS code is relatively small according to the so-called MDS conjecture \cite{Ketkar}, therefore, fixed a field $\mathbb{F}_q$, it is also desirable to find much longer codes with good parameters. Quantum codes of this last type can be obtained from evaluation codes and their subfield-subcodes \cite{QINP, QINP2, Traza}. The previous references consider large fields, $\mathbb{F}_{q^{2n}}$, and evaluate suitable vector spaces of polynomials with coefficients in $\mathbb{F}_{q^{2n}}$ at suitable roots of the unity where, in addition, one may or may not evaluate at zero. However, in \cite{Traza}, we discovered that evaluating at the set formed by the roots of the trace polynomial $\mathrm{tr}_{2n}(X) = X+ X^q+ \cdots + X^{q^{2n-1}}$, one gets excellent $q$-ary quantum codes, both binary and non-binary.

Motivated by the fact that evaluating at the zeros of the trace polynomial produces codes with good behaviour, we consider trace-depending polynomials, instead of $\mathrm{tr}_{2n}(X)$, with the aim of obtaining quantum codes with new lengths and good parameters. For us, a trace-depending polynomial is a polynomial of the form $a + \mathrm{tr}_{2n}(h(X))$, where $a \in \mathbb{F}_{q^{2n}}$ and $h(X) \in \mathbb{F}_{q^{2n}}[X]$. The benefits of this new procedure are showed at the end of the paper, in Subsection \ref{sporadic}, where \textit{we introduce several trace-depending polynomials} such that evaluating at their roots gives rise to \textit{a considerable number of binary quantum records according to} \cite{codetables}. For us, a (code) record means a binary quantum code such that either its parameters are better than those of a code in \cite{codetables} or match with a missing entry in \cite{codetables}. These codes are stabilizer and we are able to determine their dimensions and minimum distances, but we need to use the computational algebra system Magma \cite{Magma} for checking the Hermitian self-orthogonality of the involved linear codes.

As a consequence, it is undoubtedly interesting to do a theoretical analysis of the family of quantum codes obtained by evaluating at the zeros of trace-depending polynomials. That is, to give conditions guaranteeing self-orthogonality for the constituent linear codes and compute their parameters. Notice that the length of the codes given in \cite{Traza} is $q^{2n-1}$ since $\mathrm{tr}_{2n}(X)$ completely factorizes in the field $\mathbb{F}_{q^{2n}}$, but these new quantum codes have a wider range of lengths. A global study seems untractable because the behaviour of the trace-depending polynomials is unknown.

In this paper, \textit{we restrict ourselves to a specific family of trace-depending polynomials and perform a complete study of the stabilizer quantum codes supported on that family}. This family is formed by the so-called $b$th trace-depending polynomials, $\mathrm{Tr}_b(X)$, where $b=b(t)=1 + q^t$, $0 < t \leq n$  (see Definition \ref{deftraza}); note that, in order to apply our forthcoming  Lemma \ref{newton}, we consider only polynomials $\mathrm{Tr}_b(X)$ which completely factorize in $\mathbb{F}_{q^{2n}}$. Proposition \ref{La5} gives a full description of the polynomials $\mathrm{Tr}_b(X)$, and Theorem \ref{El7} determines when the sum of the $i$th powers, $1\leq i \leq \deg\mathrm{Tr}_b(X)$, of the roots of $\mathrm{Tr}_b(X)$ vanishes, which is a crucial fact for determining the self-orthogonality of the constituent linear codes. This last property is studied in {\it Theorem \ref{era2} giving rise to $q^n$-ary stabilizer quantum codes (see Corollary \ref{el15})}. All these results are presented in Section \ref{se:dos}.

With the above ingredients and using subfield-subcodes, in Subsection \ref{subfield} we {\it determine parameters of $q^{n'}$-ary stabilizer quantum error-correcting codes, where $n'$ divides $n$} (see {\it Theorem \ref{bueno}}). One can also obtain quantum codes by successively using Theorem \ref{era2} and {\it Theorem \ref{Eldearxiv}} (stated in Subsection \ref{Ejemplos}) or Theorems \ref{bueno} and \ref{Eldearxiv}. In this way, we get many new good codes. These codes enlarge the constellation of lengths of the quantum error-correcting codes obtained by evaluating at the zeros of the trace polynomial \cite{Traza}.

Subsections \ref{binary} and \ref{nonbinary} supply quantum codes constructed with our theoretical results. In Subsection \ref{binary}, we prove that \textit{our development gives rise to new and good binary quantum codes, some of them being records according to} \cite{codetables}. In Subsection \ref{nonbinary} we provide \textit{new non-binary quantum error-correcting codes, some of them improving the parameters of the codes available in the literature}. All the given codes have parameters exceeding the quantum Gilbert-Varshamov bounds \cite{FengMa, Ketkar, 71kkk}.

\section{Evaluation codes and $b$th trace-depending polynomials}
\label{se:dos}


In this section, we introduce a particular family of trace-depending polynomials and consider linear codes that evaluate at the roots of the polynomials in this family. We study their parameters and self-orthogonality conditions. Later, we will see that good stabilizer quantum codes can be derived from them and their subfield-subcodes.

\subsection{The $b$th trace-depending polynomials}
\label{sb:dosuno}

Let $q$ be a prime power. Since in the future we will be interested in subfield-subcodes and Hermitian duality, our initial results are stated over the field $\mathbb{F}_{q^{2n}}$ with $n$ a positive integer.

For defining the trace-depending polynomials we are interested in, we consider the trace polynomials $\mathrm{tr}_{2n}(X)$ and $\mathrm{tr}_{n}(X)$ defined as follows:
\[
\mathrm{tr}_{j}(X) : = X+ X^q+ X^{q^2}+ \cdots + X^{q^{j-1}},
\]
where $j$ equals either $2n$ or $n$. Next, set $b = b(t)=1 + q^t$ for some integer number $0< t \leq n$ and introduce the polynomial
\begin{equation*}
P_b(X):=\left\{
\begin{array}{lcc}
1 + \mathrm{tr}_{2n}(X^b) & \mathrm{if} & 0 < t< n \\
1 + \mathrm{tr}_{n}(X^b) & \mathrm{otherwise} & (t=n).
\end{array}
\right.
\end{equation*}
Now, consider the quotient ring $R:= \mathbb{F}_{q^{2n}}[X] /\langle X^{q^{2n}-1} -1  \rangle$ and we are ready for introducing the concept of $b$th trace-depending polynomial.

\begin{de}
\label{deftraza}
{\rm With the above notation, we denote by $\mathrm{Tr}_b(X)$ the representative with minimum degree of the class of $P_b(X)$ in $R$. We name $\mathrm{Tr}_b(X)$ the $b$th trace-depending polynomial. Furthermore, it defines a polynomial map that we call the $b$th trace-depending map.
}
\end{de}

\begin{rem}
{\rm
Later we will introduce codes obtained by evaluating at the roots of the  polynomial $\mathrm{Tr}_b(X)$. When $t=n$, Definition \ref{deftraza} uses the polynomial $\mathrm{tr}_{n}(X)$ instead of $\mathrm{tr}_{2n}(X)$ because, when the characteristic of the field $\mathbb{F}_{q^{2n}}$ is $2$, otherwise $\mathrm{Tr}_b(X) = 1 $ and  $\mathrm{Tr}_b(X)$ has no roots. Indeed, when $t=n$, a simple computation shows the equality
\[
\mathrm{tr}_{2n}(X^b) + \langle X^{q^{2n}-1} -1  \rangle = 2 \left[ \mathrm{tr}_{n}(X^b) + \langle X^{q^{2n}-1} -1  \rangle \right],
\]
which also proves that, in this case ($t=n$), when the characteristic of the field $\mathbb{F}_{q^{2n}}$ is not even, the trace maps $\mathrm{tr}_{2n}$ and $\mathrm{tr}_{n}$ play an analogous role.
}
\end{rem}

Next, we determine the degree of the polynomial $\mathrm{Tr}_b(X)$.
\begin{pro}
\label{grado}
The degree of the $b$th trace-depending polynomial $\mathrm{Tr}_b(X)$, $b=1 +q^t, \; 1 < t \leq n$, is $m=m(t) = q^{2n-1-t} + q^{2n-1}$.
\end{pro}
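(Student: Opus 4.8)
The plan is to compute $\mathrm{Tr}_b(X)$ explicitly as the minimal-degree representative of $P_b(X)$ modulo $X^{q^{2n}-1}-1$, and then simply read off the leading term. First I would expand $\mathrm{tr}_{2n}(X^b)$. Since $b = 1 + q^t$, we have $X^b = X^{1+q^t}$ and
\[
\mathrm{tr}_{2n}(X^b) = \sum_{j=0}^{2n-1} X^{(1+q^t)q^j} = \sum_{j=0}^{2n-1} X^{q^j + q^{j+t}},
\]
where exponents are read modulo $2n$ in the power of $q$ (because $X^{q^{2n}-1}=1$ in $R$, so $X^{q^{2n}}=X$, hence $X^{q^{j}} = X^{q^{j \bmod 2n}}$ after reducing exponents modulo $q^{2n}-1$). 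So each summand is a monomial $X^{e_j}$ with $e_j = q^{(j \bmod 2n)} + q^{((j+t) \bmod 2n)}$, reduced modulo $q^{2n}-1$ into the range $\{1,\dots,q^{2n}-2\}$ (one checks $e_j$ is never $0$ or a multiple of $q^{2n}-1$ when $0<t\le n$ and $t\ne n$ in characteristic $2$, which is the standing assumption). The candidate for the degree is the largest exponent appearing, and I expect it to be $m = q^{2n-1} + q^{2n-2-... }$; more precisely the dominant contribution is the pair of indices giving the two largest distinct powers of $q$, namely $q^{2n-1}$ together with $q^{2n-1-t}$, yielding $m = q^{2n-1} + q^{2n-1-t}$ as claimed.

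The key steps, in order, are: (1) write out the $2n$ monomials $X^{q^{j}+q^{j+t}}$ (indices mod $2n$); (2) identify, among all exponents $q^{a}+q^{b}$ with $0\le a,b\le 2n-1$ arising this way, the maximum one; since $q^{2n-1}+q^{2n-1-t} < q^{2n}-1$ for $t\ge 1$, this exponent is already reduced and is a genuine monomial of $P_b(X) \bmod \langle X^{q^{2n}-1}-1\rangle$; (3) check that this top monomial is not cancelled — i.e. that the index pair producing it is unique, or that if several pairs produce the same exponent their coefficients in $\mathbb{F}_{q^{2n}}$ (all equal to $1$, plus possibly the constant $1$ which sits in degree $0$) do not sum to zero; (4) conclude $\deg \mathrm{Tr}_b(X) = m$. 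Since $t>1$ here, the exponent $q^{2n-1}+q^{2n-1-t}$ corresponds to the single index $j = 2n-1-t$ (giving powers $q^{2n-1-t}$ and $q^{2n-1}$), and the reversed pair $j$ such that $j+t \equiv 2n-1-t$, $j \equiv 2n-1$ would need $j = 2n-1$ and $j+t \equiv 2n-1-t \pmod{2n}$, i.e. $2t \equiv 0 \pmod{2n}$, which for $1<t\le n$ forces $t=n$ — excluded in this proposition (it treats $1<t\le n$, but the $t=n$ edge case changes $b$'s behaviour and is where the constituent expression differs). So uniqueness holds and no cancellation occurs.

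The main obstacle I anticipate is step (3): controlling collisions among the exponents $q^{j}+q^{j+t}$ after reduction modulo $q^{2n}-1$, and making sure the putative leading monomial survives. Collisions can happen because $q$-adic digit patterns $q^a+q^b$ are not all distinct once $a$ or $b$ exceed $2n-1$ and get folded back — but for the top term this is exactly the computation above showing the only relevant coincidence would require $t=n$. A secondary subtlety is the interaction with the constant term $1$ in $P_b(X)$ and, when $t=n$, the switch from $\mathrm{tr}_{2n}$ to $\mathrm{tr}_{n}$; since the proposition's hypothesis is $1<t\le n$, I would treat $t=n$ either by the remark's identity $\mathrm{tr}_{2n}(X^b)\equiv 2\,\mathrm{tr}_n(X^b)$ (so in odd characteristic the two differ only by a unit and the degree is unchanged) or by noting that with $t=n$, $q^{2n-1-t}=q^{n-1}$ and $q^{2n-1}$ are still the two largest powers, so $m = q^{n-1}+q^{2n-1}$ matches the stated formula $m = q^{2n-1-t}+q^{2n-1}$. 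In all cases the degree is $m(t)=q^{2n-1-t}+q^{2n-1}$, completing the proof.
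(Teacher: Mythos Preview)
Your approach is essentially the same as the paper's: both identify the nonzero exponents of $\mathrm{Tr}_b(X)$ as the integers $q^{j \bmod 2n}+q^{(j+t)\bmod 2n}$ (equivalently, the cyclic shifts of the $q$-adic expansion of $b$) and then observe that the largest such value is $q^{2n-1-t}+q^{2n-1}$, with the $t=n$ case handled directly from the definition via $\mathrm{tr}_n$. The only minor slip is your remark that $t=n$ is ``excluded in this proposition'' --- it is included (and you do treat it), so just rephrase that sentence.
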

\begin{proof}
The case when $t=n$ is clear. Assume $0 <t <n$ and write
$$
\mathrm{Tr}_b(X) = \sum_{k=0}^m a_k X^k.
$$
Since $P_b(X)$ has no term involving a power of $X^{q^{2n}-1}$ and $\mathrm{Tr}_b(X)$ is the representative with minimum degree of its class in $R$, $a_0=1$. Write $k=\sum_{\ell=0}^{2n-1} \kappa_\ell q^\ell$, with $0 \leq \kappa_\ell < q$, the $q$-adic expansion of the exponents $k>0$ such that $a_k \neq0$. Sometimes, for the sake of simplicity and easiness, this $q$-adic expansion will be represented with a $2n$-tuple $(k)_q$ as in Table \ref{QA1}.
\begin{table}[ht]
\centering
\begin{tabular}{|c|c|c|c|c|c|c|c|c|c|}
  \hline
   & $q^{0}$ & $q^{1}$ & $\cdots$ & $q^{t-1}$ & $q^{t}$ & $\cdots$ & $q^{2n-t}$ & $\cdots$ & $q^{2n-1}$ \\ \hline
   $(k)_q$ & $\kappa_0$ &  $\kappa_1$ & $\ldots$ &  $\kappa_{t-1}$ &  $\kappa_{t}$ & $\ldots$ &  $\kappa_{2n-t}$ & $\ldots$ &  $\kappa_{2n-1}$\\
  \hline
\end{tabular}
\caption{$q$-adic expansion of $k$}
\label{QA1}
\end{table}
The $q$-adic expansion of $b$ is displayed in Table \ref{QA2},
\begin{table}[ht]
\centering
\begin{tabular}{|c|c|c|c|c|c|c|c|c|c|c|}
  \hline
   & $q^{0}$ & $q^{1}$ & $\cdots$ & $q^{t-1}$ & $q^{t}$ & $q^{t+1}$ & $\cdots$ & $q^{2n-t}$ & $\cdots$ & $q^{2n-1}$ \\ \hline
  $(b)_q$ & $1$ &  $0$ & $\ldots$ &  $0$ &  $1$ & $0$ & $\ldots$ &  $0$ & $\ldots$ &  $0$\\
  \hline
\end{tabular}
\caption{$q$-adic expansion of $b$}
\label{QA2}
\end{table}
and the $q$-adic expansion of the elements $k>0$ such that $a_k \neq0$ can be obtained by successively shifting the values in Table \ref{QA2}. Indeed, each shift corresponds to an exponent $k=q^s b$, $0 \leq s \leq 2n-1$, where $q^{2n}$ is identified with $1$. As a consequence, $a_k=1$ whenever $a_k \neq0$, and the degree of the $b$th trace-depending polynomial $\mathrm{Tr}_b(X)$ is given by the sequence of shifts which gives the largest positive integer; it is $m= q^{2n-1-t} + q^{2n-1}$. Notice that, for simplicity, $b=q^0 b$ is considered a shift of $b$.
\end{proof}

Along this paper and with the above notation, we only consider triples $(q,n,b)$ satisfying the following property:
\begin{equation}
\label{A}
\mbox{ The polynomial $\mathrm{Tr}_b(X)$ has $m$ different roots in the  field $\mathbb{F}_{q^{2n}}$.}
\end{equation}
We denote these roots by $\{\beta_1, \beta_2, \ldots, \beta_m\}$.

The following result proves that the triples $(q,n,b(n))$ always satisfy Property (\ref{A}). For $0 < t< n$, by explicit computation, we have found a number of triples $(q,n,b)$ satisfying Property (\ref{A}). Some examples can be seen in Table \ref{TTabla1}. With some of these values we have obtained good stabilizer quantum codes. We do not know a general result characterizing the before mentioned triples.

\begin{table}[ht]
\centering
\begin{tabular}{||c|c|c|c||c||}
  \hline \hline
 $q$ & $n$ & $t$  & $b$ & $m$=degree    \\
 \hline \hline
    2& 2& 1 & 3 &12 \\
2& 4& 2 & 5 &160 \\
2& 4& 3 & 9 &144 \\
2& 6& 3 & 9 &2304 \\
2& 6& 5 & 33 &2112 \\
3& 2& 1 & 4 &36 \\
3& 4& 2 & 10 &2430 \\
3& 4& 3 & 28 &2268 \\
5& 2& 1 & 6 &150 \\
5& 4& 2 & 26 &81250 \\
5& 4& 3 & 126 &78750 \\
7& 2& 1 & 8 &392 \\
11& 2& 1 & 12 &1452 \\
\hline
 \hline
\end{tabular}
\caption{Triples $(q,n,b)$, $b = 1 + q^t$, satisfying Property (\ref{A})}
\label{TTabla1}
\end{table}

\begin{pro}
\label{La2}
Assume $b:= b(n)$. The $b$th trace-depending polynomial $\mathrm{Tr}_b(X) \in \mathbb{F}_{q^{2n}}[X]$  has $m = m(n):=q^{n-1}+q^{2n-1}$ different roots in the field $\mathbb{F}_{q^{2n}}$.
\end{pro}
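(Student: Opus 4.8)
The plan is to count, directly and exactly, the elements $x \in \mathbb{F}_{q^{2n}}$ with $\mathrm{Tr}_b(x) = 0$, and to show this number is $m$; since we will also see that $\deg \mathrm{Tr}_b(X) = m$, this forces $\mathrm{Tr}_b(X)$ to factor into $m$ distinct linear factors over $\mathbb{F}_{q^{2n}}$, which is exactly Property (\ref{A}) for the triple $(q,n,b(n))$. First I would make the reduction modulo $X^{q^{2n}-1}-1$ explicit. When $t=n$ we have $b=1+q^n$, so the nonzero exponents appearing in $P_b(X)=1+\mathrm{tr}_n(X^b)$ are the integers $q^s b = q^s + q^{n+s}$ for $0 \le s \le n-1$; these are pairwise distinct and each is at most $q^{n-1}+q^{2n-1}$, and a straightforward check gives $q^{n-1}+q^{2n-1} < q^{2n}-1$. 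Hence no reduction takes place, $\mathrm{Tr}_b(X) = P_b(X) = 1 + \sum_{s=0}^{n-1} X^{q^s+q^{n+s}}$ has degree $m = q^{n-1}+q^{2n-1}$, and, its constant term being $1$, $x=0$ is not a root.

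The key observation is the factorization $q^s + q^{n+s} = q^s(1+q^n)$. Given $x \in \mathbb{F}_{q^{2n}}$, put $z := x^{1+q^n} = x\cdot x^{q^n}$; this is precisely the relative norm $N_{\mathbb{F}_{q^{2n}}/\mathbb{F}_{q^n}}(x)$, so $z \in \mathbb{F}_{q^n}$. Evaluating $X^{q^s+q^{n+s}}$ at $x$ gives $(x^{1+q^n})^{q^s} = z^{q^s}$, so that
\[
\mathrm{Tr}_b(x) \;=\; 1 + \sum_{s=0}^{n-1} z^{q^s} \;=\; 1 + \mathrm{tr}_n(z),
\]
and for $z \in \mathbb{F}_{q^n}$ the quantity $\mathrm{tr}_n(z) = z + z^q + \cdots + z^{q^{n-1}}$ is exactly the relative trace $\mathrm{Tr}_{\mathbb{F}_{q^n}/\mathbb{F}_q}(z) \in \mathbb{F}_q$. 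Thus $x$ is a root of $\mathrm{Tr}_b(X)$ if and only if $\mathrm{Tr}_{\mathbb{F}_{q^n}/\mathbb{F}_q}\!\bigl(N_{\mathbb{F}_{q^{2n}}/\mathbb{F}_{q^n}}(x)\bigr) = -1$.

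Finally I would count using two standard facts. The trace $\mathrm{Tr}_{\mathbb{F}_{q^n}/\mathbb{F}_q}\colon \mathbb{F}_{q^n}\to\mathbb{F}_q$ is a surjective $\mathbb{F}_q$-linear map, so each of its fibres has $q^{n-1}$ elements; and the norm $N_{\mathbb{F}_{q^{2n}}/\mathbb{F}_{q^n}}\colon \mathbb{F}_{q^{2n}}^*\to\mathbb{F}_{q^n}^*$ is a surjective group homomorphism, so each of its fibres has $(q^{2n}-1)/(q^n-1)=q^n+1$ elements. Since $-1\neq 0$ in $\mathbb{F}_q$, the $q^{n-1}$ elements $z$ with $\mathrm{Tr}_{\mathbb{F}_{q^n}/\mathbb{F}_q}(z)=-1$ all lie in $\mathbb{F}_{q^n}^*$, and for each of them $N^{-1}(z)$ has $q^n+1$ elements. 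Hence $\mathrm{Tr}_b(X)$ has exactly $q^{n-1}(q^n+1)=q^{2n-1}+q^{n-1}=m$ roots in $\mathbb{F}_{q^{2n}}$; as $\deg \mathrm{Tr}_b(X)=m$, these roots are distinct, which proves the proposition.

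I do not expect a real obstacle here: the only delicate points are the bookkeeping showing the exponents $q^s+q^{n+s}$ neither collide nor exceed $q^{2n}-1$ (so that the degree is genuinely $m$ and $\mathrm{Tr}_b(X)\neq 0$), and recognizing that $x^{1+q^n}$ is the relative norm onto $\mathbb{F}_{q^n}$ while $\mathrm{tr}_n$ restricted to $\mathbb{F}_{q^n}$ is the relative trace onto $\mathbb{F}_q$; once these are in place, the fibre count is immediate.
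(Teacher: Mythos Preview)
Your argument is correct and is essentially the same as the paper's: both factor the map $x\mapsto P_b(x)-1$ as the composition of the $b$th-power map $g(x)=x^{1+q^n}$ from $\mathbb{F}_{q^{2n}}$ to $\mathbb{F}_{q^n}$ with the absolute trace $\mathrm{tr}_n:\mathbb{F}_{q^n}\to\mathbb{F}_q$, and then count the fibre $(\mathrm{tr}_n\circ g)^{-1}(-1)$ as $q^{n-1}(q^n+1)$. The only cosmetic difference is that you name $g$ as the relative norm and make explicit the degree computation (so that the root count matching the degree forces simplicity), whereas the paper relies on Proposition~\ref{grado} for the degree and simply states the fibre sizes for $g$ without invoking the word ``norm''.
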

\begin{proof}
The polynomial $\mathrm{tr}_{n}(X)$ gives the trace map $\mathrm{tr}_{n}: \mathbb{F}_{q^n} \rightarrow \mathbb{F}_q$. The map $g: \mathbb{F}_{q^{2n}} \rightarrow \mathbb{F}_{q^n}$ defined as $g(x)=x^b$ is well-defined and it is surjective; with the exception of $0 \in \mathbb{F}_{q^n}$, each element in $\mathbb{F}_{q^n}$ has $q^n +1$ counter-images. The map $P_b - 1$ defined by $P_b(X) - 1$ satisfies $P_b - 1 = \mathrm{tr}_{n} \circ g$, therefore the roots of $\mathrm{Tr}_b(X)$ are exactly the set $$(\mathrm{tr}_{n} \circ g)^{-1} (-1)= g^{-1}[\mathrm{tr}_{n}^{-1}(-1)].$$ Since $\mathrm{tr}_{n}$ is a trace map, $\mathrm{tr}_{n}^{-1}(-1)$ has $q^{n-1}$ different elements and the cardinality of $(\mathrm{tr}_{n} \circ g)^{-1} (-1)$ is $(q^n +1)q^{n-1} = q^{2n-1} + q^{n-1}$, which concludes the proof.
\end{proof}

\subsection{Evaluation codes}
\label{sb:dosdos}
Now we are going to define the family of codes we are interested in. We only consider triples $(q,n,b)$ satisfying Property (\ref{A}). We fix any of them, set $\mathrm{Tr}(X) := \mathrm{Tr}_b(X)$ and  define the evaluation map $\mathrm{ev}_{\mathrm{Tr}}$ at the roots of $\mathrm{Tr}(X)$, $\{\beta_1, \beta_2, \ldots, \beta_m\}$, as:
\begin{equation}
\label{evaluation}
\mathrm{ev}_{\mathrm{Tr}}: \mathbb{F}_{q^{2n}}[X]/\langle \mathrm{Tr}(X) \rangle \to \mathbb{F}_{q^{2n}}^m, \mbox{ given by $\mathrm{ev}(h)=(h(\beta_1),h(\beta_2), \ldots, h(\beta_m))$,}
\end{equation}
where $h$ stands for the class of a polynomial $h \in \mathbb{F}_{q^{2n}}[X]$ in $\mathbb{F}_{q^{2n}}[X]/\langle \mathrm{Tr}(X) \rangle$ and its corresponding polynomial function.

\begin{de}
{\rm
Let $\mathcal{H}=\{0,1, \ldots, m-1\}$ and consider a non-empty subset $\Delta \subseteq \mathcal{H}$.  We define the  evaluation code, $E_{\Delta, \mathrm{Tr}}$, of $\Delta$ at the roots of the trace-depending polynomial $\mathrm{Tr}(X)$ (given by a triple $(q,n,b)$) as the linear code of length $m$ over the field $\mathbb{F}_{q^{2n}}$ generated by the set $\{\mathrm{ev}_{\mathrm{Tr}} (X^i)\;|\; i \in \Delta\}$.}
\end{de}

Our next result describes the polynomial $\mathrm{Tr}(X)$. Recall that $b=1+ q^t$ with $0<t\leq n$.

\begin{pro}
\label{La5}
Let $\mathrm{Tr}(X) = \sum_{k=0}^{m} a_k X^k$. One has that $a_k=0$ for all indices $k$, with the exception of:
\begin{itemize}
\item $k=0$;
\item $k=q^j b$, where $0 \leq j \leq 2n -t -1$;
\item and, when $t <n$, $k=k_j : = q^{j-1} (1+q^{2n-t})$ for  $1 \leq j \leq t$.
\end{itemize}
Thus, $\mathrm{Tr}(X)$ has $2n+1$  non-zero coefficients $a_k$ when $t <n$ and it has $n+1$ otherwise ($t=n$). All the non-vanishing coefficients are equal to $1$.
\end{pro}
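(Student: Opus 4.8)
The plan is to read the nonzero coefficients of $\mathrm{Tr}(X)=\mathrm{Tr}_b(X)$ directly off the defining polynomial $P_b(X)$ by reducing exponents modulo $q^{2n}-1$, and then to use $q$-adic bookkeeping---as in the proof of Proposition \ref{grado}---to check that no two of the resulting exponents coincide, so that no cancellation occurs and every surviving coefficient equals $1$ (I tacitly assume $\mathrm{Tr}_b(X)\neq 0$, which excludes only the degenerate triple $q=2$, $n=1$).

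For $0<t<n$, I would start from
\[
P_b(X)-1=\mathrm{tr}_{2n}(X^b)=\sum_{\ell=0}^{2n-1}X^{q^\ell b},\qquad q^\ell b=q^\ell+q^{\ell+t},
\]
and obtain $\mathrm{Tr}(X)$ by replacing each exponent by its residue modulo $q^{2n}-1$ (using $q^{2n}\equiv 1$) and collecting terms. I would split the sum according to whether $\ell+t\le 2n-1$ or $\ell+t\ge 2n$: in the first range $q^\ell+q^{\ell+t}$ is already reduced and nonzero, giving exactly the exponents $k=q^jb$ with $j=\ell$, $0\le j\le 2n-t-1$; in the second range one replaces $q^{\ell+t}$ by $q^{\ell+t-2n}$, and, setting $j:=\ell-(2n-t)+1\in\{1,\dots,t\}$, the reduced exponent becomes $q^{j-1}+q^{\,j-1+2n-t}=q^{j-1}(1+q^{2n-t})=k_j$. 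This already yields the three families of exponents in the statement.

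Next I would verify that these $2n$ nonzero exponents, together with $k=0$, are pairwise distinct. Each is a sum of two distinct powers of $q$, so its $q$-adic expansion has digit $1$ in exactly two positions and no carry (here $t\ge 1$ and, since $t<n$, also $2n-t\ge 1$); the two positions differ by $t$ for the first family and by $2n-t$ for the second, and $t\neq 2n-t$ precisely because $t<n$. Hence the two families are disjoint; within each family the exponent is determined by its lower digit position, so distinct $\ell$ yield distinct residues; and each residue lies in $[1,q^{2n}-2]$, so the corresponding monomial in $R$ is distinct from the constant term. Consequently $\sum_\ell X^{q^\ell b}$ contributes $2n$ distinct monomials, each with coefficient $1$ (no cancellation occurs, even in characteristic $2$), and the constant $1$ of $P_b$ supplies the monomial $a_0X^0$; this gives $(2n-t)+t+1=2n+1$ nonzero coefficients, all equal to $1$. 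As a consistency check, the largest of these exponents, $q^{2n-t-1}+q^{2n-1}$, matches the degree $m$ computed in Proposition \ref{grado}.

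Finally, for $t=n$ one has $P_b(X)-1=\mathrm{tr}_{n}(X^b)=\sum_{\ell=0}^{n-1}X^{q^\ell b}$ with $b=1+q^n$, so $q^\ell b=q^\ell+q^{\ell+n}$; since $\ell+n\le 2n-1$ these values already lie in $[1,q^{2n}-2]$, no reduction is needed, the exponents are exactly $k=q^\ell b$ for $0\le\ell\le n-1=2n-t-1$ (the third item being vacuous), they are distinct by the same digit argument, and one gets $n+1$ nonzero coefficients, all equal to $1$. The only step that is not a routine computation is precisely the \emph{absence of collisions and cancellations} after reducing modulo $q^{2n}-1$ (which, in coding-theoretic terms, amounts to the $q$-cyclotomic coset of $b$ having its expected size), and this is exactly what the $q$-adic digit count above controls; everything else is a direct transcription of $P_b(X)$.
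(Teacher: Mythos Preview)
Your proof is correct and follows essentially the same approach as the paper's: both compute the nonzero exponents of $\mathrm{Tr}(X)$ by reducing the exponents $q^\ell b$ of $P_b(X)$ modulo $q^{2n}-1$ and tracking them via their $q$-adic expansions (the paper phrases this as ``shifts of $(b)_q$''). Your argument is in fact more explicit than the paper's about verifying that the $2n$ reduced exponents are pairwise distinct (using that the two $1$-digits are separated by $t$ in one family and by $2n-t\neq t$ in the other), a point the paper leaves implicit in its shift description.
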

\begin{proof}
It is clear that $a_0=1$. The monomials in the second item of the statement: $X^{q^j b}$, $0 \leq j \leq 2n -t -1$, are terms with coefficient $1$ in the polynomial $\mathrm{Tr}(X)$ by the construction of $P_b(X)$, and they are the only terms with non-vanishing coefficient when $t=n$ because taking classes modulo the ideal $\langle X^{q^{2n}-1} -1 \rangle$ does not produce any modification of $P_b(X)$.

When $t<n$, apart from the above monomials, there are new terms with coefficient $1$ in the expression of  $P_b(X)$ which are $X^{q^j b}$, $2n -t \leq j \leq 2n -1$. Recall that $\mathrm{Tr}(X)$ is the representative of minimum degree of the class of $P_b(X)$ modulo $\langle X^{q^{2n}-1} -1 \rangle$. As we explained in the proof of Proposition \ref{grado}, the representatives of the classes modulo $\langle X^{q^{2n}-1} -1 \rangle$ of the monomials $X^{q^j b}$, $0 \leq j \leq 2n-1$, are monomials $X^k$ where $k$ is an integer whose $q$-adic expansion (see Table \ref{QA1}) is given by a sequence of shifts of the $q$-adic expansion of $b$ (see Table \ref{QA2}). Clearly the monomials $X^{q^j b}$, with $0 \leq j \leq 2n -t -1$, correspond to the first shifts and those where $2n -t \leq j \leq 2n -1$ correspond to the values $k_j$ in the last item of the statement.
\end{proof}

\begin{rem}
Table \ref{TTableoplus} shows the $q$-adic expansions of the indices $k \neq 0$ in the expression of $\mathrm{Tr}(X) = \sum_{k=0}^{m} a_k X^k$ such that $a_k \neq 0$. Recall that the values $k_j$ introduced in Proposition \ref{La5} do not appear when $t=n$ and notice that $b$ and $k_1$ are the unique indices which are not a multiple of $q$. The $q$-adic expansions show how the indices $k$ are ordered as natural numbers, for instance $m > k_t$ and both are larger than the remaining ones.
\end{rem}

\begin{table}[ht]
\centering
\begin{tabular}{|c|c|c|c|c|c|c|c|c|c|c|c|}
  \hline
   & $q^{0}$ & $q^{1}$ & $\cdots$ & $q^{t-1}$ & $q^{t}$ & $q^{t+1}$ & $\cdots$ & $q^{2n-t-1}$ &$q^{2n-t}$  & $\cdots$ & $q^{2n-1}$ \\ \hline
  $(b)_q$ & $1$ &  $0$ & $\ldots$ &  $0$ &  $1$ & $0$ & $\ldots$ & $0$& $0$ & $\ldots$ &  $0$\\
  \hline
  $(qb)_q$ & $0$ &  $1$ & $\ldots$ &  $0$ &  $0$ & $1$ & $\ldots$ & $0$& $0$ & $\ldots$ &  $0$\\\hline
  $\vdots$ & $\vdots$ & $\vdots$ & $\cdots$ &  $\vdots$ &  $\vdots$ & $\vdots$ & $\cdots$ & $\vdots$& $\vdots$&  $\cdots$ &  $\vdots$\\ \hline
  $(m)_q=(q^{2n-t-1}b)_q$ & $0$ &  $0$ & $\ldots$ &  $0$ &  $0$ & $0$ & $\ldots$ & $1$& $0$ & $\ldots$ &  $1$\\ \hline
  $(k_1)_q$ & $1$ &  $0$ & $\ldots$ &  $0$ &  $0$ & $0$ & $\ldots$ & $0$& $1$ & $\ldots$ &  $0$\\ \hline
  $\vdots$ & $\vdots$ & $\vdots$ & $\cdots$ &  $\vdots$ &  $\vdots$ & $\vdots$ & $\cdots$ & $\vdots$& $\vdots$&  $\cdots$ &  $\vdots$\\
  \hline
  $(k_t)_q$ & $0$ &  $0$ & $\ldots$ &  $1$ &  $0$ & $0$ & $\ldots$ & $0$& $0$ & $\ldots$ &  $1$\\ \hline
\end{tabular}
\caption{$q$-adic expansions of the indices $k\neq 0$ such that $a_k \neq 0$}
\label{TTableoplus}
\end{table}

We are interested in codes $E_{\Delta,{\mathrm{Tr}}} \subseteq \mathbb{F}_{q^{2n}}^{m}$ which are self-orthogonal with respect to the  Hermitian inner product because we aim to construct quantum stabilizer codes. For this reason (see the proof of the forthcoming Theorem \ref{era2}),  we introduce the following values:
\[
s_i := \sum_{j=1}^{m} \beta_j^i;\;\;\; 1 \leq i \leq q^{2n}-1.
\]

Now we state a  result involving the above values $s_i$ in the fashion of \cite[Lemma 4]{Traza}, which can be proved similarly.

\begin{lem}
\label{newton}
With the above notation, for every index $r$ such that $1 \leq r \leq m$, the following equality
\[
\left( \sum_{j=0}^{r-1} a_{m-j} s_{r-j} \right) + r a_{m-r}=0
\]
holds.

In addition, when $r > m$, one gets
\[
\sum_{j=0}^{m} a_{m-j} s_{r-j} = 0.
\]
\end{lem}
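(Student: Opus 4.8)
The plan is to exploit the classical identities relating power sums of the roots of a polynomial to its coefficients, i.e.\ Newton's identities, exactly as in \cite[Lemma 4]{Traza}. The key point is that $\mathrm{Tr}(X)=\sum_{k=0}^m a_k X^k$ is, by Property (\ref{A}), a polynomial of degree $m$ with $m$ distinct roots $\beta_1,\dots,\beta_m$ in $\mathbb{F}_{q^{2n}}$, and by Proposition \ref{La5} its leading coefficient is $a_m=1$, so $\mathrm{Tr}(X)=\prod_{j=1}^m (X-\beta_j)$ is monic of degree $m$. Write the reversed (reciprocal) polynomial $\mathrm{Tr}^*(X):=X^m\,\mathrm{Tr}(1/X)=\sum_{j=0}^m a_{m-j} X^j$, whose coefficient of $X^j$ is $a_{m-j}$; note $a_{m-j}=0$ once $j$ exceeds the ``gap'' below the top degree, and $a_m=1$ while $a_{m-m}=a_0=1$, so $\mathrm{Tr}^*$ has the same degree $m$ and nonzero constant term, with roots $\beta_j^{-1}$ (all $\beta_j\neq 0$ since $a_0\neq 0$).

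First I would set up the generating-function identity. Over the ring of formal power series, one has the logarithmic-derivative identity
\[
\frac{(\mathrm{Tr}^*)'(X)}{\mathrm{Tr}^*(X)} = \sum_{j=1}^m \frac{-\beta_j^{-1}}{\,1-\beta_j^{-1}X\,}\cdot(-1) = -\sum_{j=1}^m \sum_{i\ge 0} \beta_j^{-(i+1)} X^{i},
\]
wait---more cleanly, I would instead work directly with $\mathrm{Tr}(X)$ and the power sums $s_i=\sum_j \beta_j^i$ of the actual roots, using the standard fact that
\[
\frac{X\,\mathrm{Tr}'(X)}{\mathrm{Tr}(X)} = \sum_{j=1}^m \frac{X}{X-\beta_j} = \sum_{j=1}^m \sum_{i\ge 0}\Bigl(\frac{\beta_j}{X}\Bigr)^{i} = m + \sum_{i\ge 1} s_i X^{-i}
\]
as a Laurent series at $X=\infty$. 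Clearing denominators, $X\,\mathrm{Tr}'(X) = \bigl(m+\sum_{i\ge 1}s_i X^{-i}\bigr)\mathrm{Tr}(X)$, and then comparing the coefficient of $X^{m-r}$ on both sides, for $1\le r\le m$, yields
\[
(m-r)\,a_{m-r} = m\,a_{m-r} + \sum_{i=1}^{r} s_i\,a_{m-r+i},
\]
i.e.\ $-r\,a_{m-r} = \sum_{j=0}^{r-1} a_{m-j}\,s_{r-j}$, which is precisely the first asserted identity. For $r>m$, the coefficient of $X^{m-r}$ (a strictly negative power below $X^0$) in $X\,\mathrm{Tr}'(X)$ vanishes because $\mathrm{Tr}$ is a genuine polynomial, while on the right it equals $m\,a_{m-r}+\sum_{i\ge 1}s_i a_{m-r+i}$; since $a_\ell=0$ for $\ell<0$ and $a_\ell=0$ for $\ell>m$, only the terms with $0\le m-r+i\le m$, i.e.\ $i$ ranging so that $m-j:=m-r+i$ with $0\le j\le m$, survive, and $a_{m-r}=0$ too (as $m-r<0$), giving $\sum_{j=0}^m a_{m-j}s_{r-j}=0$.

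The main technical care---and the step I expect to be the only real obstacle---is bookkeeping the index ranges: one must check that truncating the infinite sum $\sum_{i\ge 1}s_i X^{-i}$ at $i=r$ in the finite identity is legitimate, which is exactly where the polynomial (not power-series) nature of $\mathrm{Tr}$ enters, and one must confirm that all $\beta_j$ are nonzero so that the geometric expansion $\frac{1}{X-\beta_j}=\sum_{i\ge 0}\beta_j^i X^{-i-1}$ is valid and $\mathrm{Tr}(0)=a_0=1\neq 0$ indeed guarantees this. Beyond that, the argument is a formal-power-series manipulation; alternatively, one can avoid generating functions entirely and invoke the classical Newton--Girard identities for the monic polynomial $\mathrm{Tr}(X)=\prod(X-\beta_j)$ relating $e_k$ (elementary symmetric functions, equal to $(-1)^k a_{m-k}$) and $p_k=s_k$, namely $p_r - e_1 p_{r-1} + \cdots + (-1)^{r-1}e_{r-1}p_1 + (-1)^r r\,e_r = 0$ for $r\le m$ and $p_r - e_1 p_{r-1}+\cdots+(-1)^m e_m p_{r-m}=0$ for $r>m$; substituting $e_k=(-1)^k a_{m-k}$ and simplifying signs gives both displayed equalities. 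I would present whichever is shorter, most likely just citing \cite[Lemma 4]{Traza} for the method and indicating that the proof goes through verbatim since $\mathrm{Tr}(X)$ is monic of degree $m$ with $m$ distinct nonzero roots and $a_0=1$.
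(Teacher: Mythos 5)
Your proof is correct and is precisely the argument the paper intends: the paper gives no proof of Lemma \ref{newton}, merely asserting it "can be proved similarly" to \cite[Lemma 4]{Traza}, and that argument is the classical Newton--Girard identity for the monic polynomial $\mathrm{Tr}(X)=\prod_{j=1}^m(X-\beta_j)$ (valid in any characteristic), which you derive both via the Laurent expansion of $X\,\mathrm{Tr}'(X)/\mathrm{Tr}(X)$ at infinity and by the substitution $e_k=(-1)^k a_{m-k}$. The only blemish is the abandoned first display with $\mathrm{Tr}^*$, where the factor $1-\beta_j X$ of $\mathrm{Tr}^*$ contributes $-\beta_j/(1-\beta_j X)$ rather than $-\beta_j^{-1}/(1-\beta_j^{-1}X)$; since you discard that route, the final argument is unaffected.
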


The following result determines the indices $i \leq m$ for which the value $s_i$ does not vanish and, therefore, it helps to show when Hermitian orthogonality of vectors $\mathrm{ev}_{\mathrm{Tr}} (X^s)$ does not hold (see, again, the proof of Theorem \ref{era2}). For that purpose, consider the set of indices:
\[
j_{2, \ell}:= 1 + (2+ \ell)q^{t-1} + (q-(2+ \ell)) q^{2n-t-1},
\]
where $0 \leq \ell \leq q-1$.
\begin{teo}
\label{El7}
Keep the notation as in Proposition \ref{La5}.

-- When $1 < t <n$, there are exactly $q$ indices $i \leq m$ such that $s_i \neq 0$. We denote these indices in increasing order as $i_0 < i_1$ whenever $q=2$. Otherwise, we denote them as
\[
i_0 < i_1 < i_{2,0} < i_{2,1} < \cdots < i_{2,q-3}.
\]
Then,
\begin{description}
\item[a)] It holds that $i_0=m-k_1$, $i_1= m-(k_1+k_t -m)$, and $i_{2,\ell}= m- j_{2, \ell}$ for $0 \leq \ell \leq q-3$.
\item[b)] $s_{i_0}= s_{i_{2,\ell}} = 1$ for $\ell=0$ and for $\ell$ even.
\item[c)] $s_{i_1}= s_{i_{2,\ell}} = -1$ for $\ell$ odd.
\end{description}
\vspace{1mm}

-- When $t=1$, there exist exactly $q+1$ indices $i \leq m $ such that $s_i \neq 0$. With the above notation, these indices are
\[
i_0 < i_1 < i_{2,0} < i_{2,1} < \cdots < i_{2,q-3} < i_{2,q-2}:=i_{2,q-3} + q^{2n-2}-q^{2n-3}+q-1,
\]
and they satisfy Property a), and properties b) and c) for $0 \leq \ell \leq q-2$. The case $q=2$, $n=2$ and $t=1$ should be treated separately; here there are four indices $i_0 < i_1 < i_{2,0} < i_{2,1}$ satisfying the above mentioned properties a), b) and c).

-- When $t=n$, there is only one index $i_0= q^{2n-1} - q^n + q^{n-1} -1$ lower than $m$ such that $s_{i_0} \neq 0$. Here we also find an exception in the case $q=t=n=2$ where there are two indices $i_0=5$ and $i_1=10$ satisfying $s_i \neq 0$.

\end{teo}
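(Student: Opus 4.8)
The plan is to compute the power sums $s_i$ recursively from the Newton-type identities of Lemma \ref{newton}. Since $a_m=1$, the first identity there rewrites as
\[
s_r \;=\; -\sum_{j=1}^{r-1} a_{m-j}\,s_{r-j}\;-\;r\,a_{m-r}, \qquad 1\le r\le m,
\]
and, by Proposition \ref{La5}, $a_{m-j}$ is nonzero --- hence equal to $1$ --- precisely when $j$ lies in the finite ``gap set''
\[
G \;:=\; \{\,m-k \;:\; 0\le k<m,\ a_k\ne 0\,\},
\]
that is, when $m-j=0$, or $m-j=q^\ell b$ with $0\le\ell\le 2n-t-2$, or $m-j=k_\ell$ with $1\le\ell\le t$ (the last possibility absent when $t=n$). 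So I would read the recursion as: $s_r$ is a signed sum of the previously found $s_{r-g}$ ($g\in G$, $g<r$) plus a ``source'' $-r\,a_{m-r}$, present iff $r\in G$, which is a \emph{nonzero} element of $\F_{q^{2n}}$ iff $p\nmid r$. The whole argument is an induction on $r$ keeping track of the few $r$ with $s_r\ne 0$; I would settle the generic case $1<t<n$ first and then adjust for $t=1$, $t=n$ and the sporadic small cases.

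First I would locate the smallest ``active'' index. Among the exponents with $a_k\ne 0$, the only one of the $q^\ell b$ prime to $p$ is $b=q^0b$, and the only one of the $k_\ell$ prime to $p$ is $k_1$; since $b<k_1$ (because $t<n$) and $m\equiv 0\pmod p$, every such exponent $k>k_1$ is divisible by $p$, so every gap $g=m-k$ below $i_0:=m-k_1$ has $p\mid g$. Hence there is no source and nothing to cascade below $i_0$, giving $s_r=0$ for $r<i_0$, while at $r=i_0$ the source $-i_0 a_{k_1}=-(m-k_1)\equiv 1\pmod p$ (using $m-k_1\equiv -1$) yields $s_{i_0}=1$.

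Next comes the core step: showing that above $i_0$ the recursion collapses. Writing $g_0:=m-k_t$ for the least element of $G$, I would use the $q$-adic expansions of Table \ref{TTableoplus} to verify that, for $1\le j\le q-1$, the integer $i_0+jg_0=m-(k_1-jg_0)$ has $q$-adic complement $k_1-jg_0=1+j\,q^{t-1}+(q-j)\,q^{2n-t-1}$, which (for $1<t<n$) has three nonzero digits, hence is not an exponent with $a_k\ne 0$; so $i_0+jg_0\notin G$, there is no source there, and the only $g\in G$ for which $i_0+jg_0-g$ is again of the form $i_0+\ell g_0$ with $0\le\ell<j$ is $g=g_0$ itself (the remaining gaps move $i_0+jg_0$ below $i_0$ or off the progression $i_0+\Z g_0$). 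This reduces the recursion to $s_{i_0+jg_0}=-s_{i_0+(j-1)g_0}$ for $1\le j\le q-1$, so $s_{i_0+jg_0}=(-1)^j$ and $s_r=0$ at every other $r$ of this range. The progression stops at $j=q$, because $k_1-qg_0=1+q^t=b$, so $i_0+qg_0=m-b\in G$ and its source $-(m-b)\equiv 1\pmod p$ exactly cancels the cascade term $-s_{i_0+(q-1)g_0}=-(-1)^{q-1}$, giving $s_{m-b}=1-(-1)^{q-1}=0$ in $\F_{q^{2n}}$ (the right-hand side vanishes since $q-1$ is even when $q$ is odd and the characteristic is $2$ when $q$ is even); the same token gives $s_r=0$ for all $m-b\le r\le m$. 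Rewriting $i_0+jg_0$, $0\le j\le q-1$, as $i_0=m-k_1$, $i_1=m-(k_1+k_t-m)$ and $i_{2,\ell}=m-j_{2,\ell}$ with $j_{2,\ell}=k_1-(2+\ell)g_0=1+(2+\ell)q^{t-1}+(q-2-\ell)q^{2n-t-1}$, $0\le\ell\le q-3$, and reading off the signs, then gives a), b) and c) in the case $1<t<n$.

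For $t=1$ one has $k_t=k_1$, so $g_0=i_0$ is already active; the extra gap $m-q^{2n-t-2}b$ then contributes a further nonzero cascade term at the single index $g_0+(m-q^{2n-t-2}b)$, which equals $i_{2,q-3}+(q^{2n-2}-q^{2n-3}+q-1)$ --- this produces the extra active index $i_{2,q-2}$, and the remaining values stay $\pm1$ as before. For $t=n$ there are no $k_\ell$, so the only gap-complement prime to $p$ is $b$, and the only nonzero value is $s_{m-b}=1$ with $m-b=q^{2n-1}-q^n+q^{n-1}-1$. The two genuine exceptions I would just unroll by hand: for $q=t=n=2$, $m=10$, $G=\{5,10\}$ and $s_5=s_{10}=1$; for $q=2$, $n=2$, $t=1$, $m=12$, $G=\{3,6,9,12\}$ and $s_3=s_6=s_9=s_{12}=1$. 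The step I expect to be hardest is the $q$-adic bookkeeping in the core step: proving, uniformly across all cases, that no spurious source occurs at $i_0+jg_0$ for $1\le j\le q-1$ and that no gap other than $g_0$ (and, when $t=1$, also $m-q^{2n-t-2}b$) sends an active index to another active index, so that the recursion genuinely collapses to the two-term relation --- with $t=1$ being the most delicate, since $G$ then contains several multiples of $g_0$.
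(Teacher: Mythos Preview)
Your approach is the paper's: both arguments run the Newton identities of Lemma~\ref{newton} against the description of $\mathrm{Supp}_{\mathrm{Tr}_b(X)}$ in Proposition~\ref{La5} and keep track, by $q$-adic expansions, of which $s_r$ survive. Your packaging of the active indices as the arithmetic progression $i_0+jg_0$ with $g_0=m-k_t$ is exactly the paper's sequence $i_0,i_1,i_{2,0},\dots,i_{2,q-3}$, and your observation that the source at $j=q$ cancels the cascade (since $k_1-qg_0=b$) is the paper's termination argument.

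What you flag as ``the step I expect to be hardest'' is precisely where almost all of the paper's work goes, and your sketch leaves two things unverified that are not automatic. First, you assert that no gap $g\ne g_0$ sends an active index to another active index, but one must also show that $s_r=0$ for \emph{every} $r$ strictly between consecutive $i_0+jg_0$ (otherwise the induction hypothesis fails); the paper does this interval by interval (Equalities~(\ref{noil}) and the surrounding computations), checking for each such $r$ and each $g\in G$ that $r-g$ is not active and that $r$ carries no source. Second, your $t=1$ paragraph is too compressed: here $k_t=k_1$, so $g_0=i_0$ and $G$ contains several multiples of $g_0$, which is why the paper isolates this as a separate Case~B with its own Lemma~\ref{dD}; your identification of the single ``extra gap'' $m-q^{2n-3}b$ as the mechanism producing $i_{2,q-2}$ is morally right but requires the full verification in Steps~B.2--B.3 to rule out further contributions.
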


\begin{proof}
Lemma \ref{newton} and $q$-adic expansions are the main tools of our proof. We divide it in two cases: Case A, where we study the case $t\neq 1$ and Case B that corresponds to the situation $t=1$. Within each case, we consider several steps and state and prove some lemmas. Step A.1 computes $i_0$ and $s_{i_0}$ proving the first equalities in a) and b) and also the first statement in the case $t=n$. Step A.2 determines $i_1$ and $s_{i_1}$ showing the second equality in a) and the first one in c). Here we also conclude the proof of the case $t=n$.
Step A.3 (respectively, A.4) computes $i_{2,0}$ and $s_{i_{2,0}}$ (respectively, $i_{2,\ell}$ and $s_{i_{2,\ell}}$ for $\ell \neq 0$). The treatment of Case B is a bit different because distinct $q$-adic expansions appear. We consider here three steps corresponding to results which will prove the statement of Theorem \ref{El7} in this case $t=1$.

Our strategy mainly consists of noticing that fixed an index $h$ such that $s_h \neq 0$, the next index $h' >h$ such that $s_{h'} \neq 0$ occurs when the coefficient of $s_h$ in the sum provided by Lemma \ref{newton} that starts with $a_m s_{h'}$ is different from zero. To reach this conclusion, we also prove that the mentioned coefficient is zero for those indices $h^{\prime \prime}$ such that $h < h^{\prime \prime} < h'$ and, in this case, $s_{h^{\prime \prime}} =0$.\\

{\it Case A. $t\neq 1$.} {\it Step A.1.} We start by proving the first equality in items a) and b) of the statement. Thus, we assume $1 < t <n$. As before, we set $\mathrm{Tr}_b(X) = \sum_{k=0}^m a_k X^k$. Denoting $\mathrm{Supp}_{\mathrm{Tr}_b(X)} = \mathrm{Supp}:=\{k \neq 0 \;|\; a_k \neq 0\}$, we have proved that
\[
\mathrm{Supp} = \{ k \;|\; (k)_q \mbox{ is obtained by iteratively applying shifts to $(b)_q$}  \}
\]
and this set has cardinality $2n$.

From our notation $i_0 := \min \{ i  \; | \; 1 \leq i \leq m \mbox{ and $s_i \neq 0$}\}$. Setting $r=i_0$ in the first equality of Lemma \ref{newton}, we get
\[
a_m s_{i_0}+ a_{m-1} s_{i_0-1} + \cdots + a_{m-(i_0-1)} s_{1} = -i_0 a_{m-i_0}.
\]
The definition of $i_0$ shows that $0 \neq s_{i_0} =  -i_0 a_{m-i_0}$. Since our ground field is $\mathbb{F}_{q^{2n}}$, this implies that $-i_0$ is not a multiple of $q$ and $a_{m-i_0} \neq 0$. Taking into account that $m=q^{2n-1-t} + q^{2n-1}$, $m-i_0$ has to be of the form $1 + \gamma q$ because otherwise $i_0$ would be a multiple of $q$ and thus $i_0 a_{m-i_0}$ would be $0$. The index $i_0$ is a minimum and therefore $m-i_0$ equals the value
\[
\max \{ m- k=m - \sum_{\ell=0}^{2n-1} \kappa_\ell q^\ell\;|\;  m - k \in \mathrm{Supp} \; \mbox{and its $q$-adic expansion starts with $1$} \}.
\]
Inspecting the set of $q$-adic expansions that are obtained as (successive) shifts of $(b)_q$ -see Table \ref{TTableoplus}- one deduces that, with the notation in Proposition  \ref{La5}, the following equality holds:
\[
m-i_0=k_1 = 1 + q^{2n-t}
\]
and hence $s_{i_0}= -(-1) =1$. Therefore the first equality in items a) and b) of the statement follows  for $1 < t<n$.

{\it When $t=n$,} noticing that the cardinality of $\mathrm{Supp}$ is $n$ and reasoning analogously, we obtain $m-i_0 = b = 1+q^n$ and therefore $i_0 = q^{2n-1} - q^n + q^{n-1} -1$. This proves our last statement with the exception of the uniqueness of $i_0$ that will be proved later.\\

{\it Step A.2.} Let us prove the first equality in Item c) of the statement. Assume  $1 < t < n$ and, as in the statement of the theorem, set $$i_1:= \min \{i \;|\; i_0 < i \leq m \mbox{ and $s_i \neq 0$} \}.$$ Again by Lemma \ref{newton}, one gets:
\[
a_m s_{i_1}+ a_{m-1} s_{i_i-1} + \cdots + a_{m+i_0-i_1} s_{i_0} + a_{m+i_0-i_1-1} s_{i_0-1} + \cdots + a_{m-(i_1-1)} s_{1} = -i_1 a_{m-i_1},
\]
which, from the definition of $i_1$, implies
\begin{equation}
\label{twostars}
s_{i_1} + a_{m+i_0-i_1} s_{i_0} =  -i_1 a_{m-i_1}.
\end{equation}
The inequalities $i_0 < i \leq m$ prove $0 \leq  m-i < m- i_0 =k_1$ and thus $i_0 \leq m +i_0 - i< i_0 + k_1$. We look for $i_1$ such that $a_{m+i_0-i_1} \neq 0$ (later we will see that this is the only possibility) and then $m+i_0-i_1$ must be equal to the value
$$\max \{m+i_0-i \in \mathrm{Supp} \;|\; i_0 \leq m+ i_0 -i < m\}.$$
Considering the $q$-adic expansions of the values in $\mathrm{Supp}$ (see Table \ref{TTableoplus}), that maximum is attained when $m+i_0-i_1 = k_t$, because $k_t$ is the larger value in $\mathrm{Supp}$ lower than $m$. Then, $i_1 =m + i_0 - k_t = m - (k_1 + k_t -m)$ as stated in a). The index $i_1$ is a multiple of $q$ if and only if $m-i_1$ is, thus looking at the $q$-adic expansions of the shifts of $(b)_q$, $i_1 a_{m-i_1}$ equals $0$ except when $m-i_1$ equals $b$ or $k_1$. Now $m-i_1$ is neither $b$ nor $k_1$ and therefore, by Equality (\ref{twostars}), $s_{i_1} = -1$ as said in Item c) of our theorem. Indeed, reasoning by contradiction, if $m-i_1 =b$  then $k_t -i_0 = b$ and thus $k_t + k_1 = m+b$, which means $q^{t-1} + q^{2n-t} = q^{t} + q^{2n-t-1}$, a contradiction. In addition, $m-i_1=k_1$ implies $k_t-i_0 =k_1$ and therefore $k_t=m$, again a contradiction.

Notice that in the searching of $i_1$, one could consider Equality (\ref{twostars}) with some index $i$, $i_0 < i < i_1$ instead of $i_1$, $s_i \neq 0$ and $a_{m+i_0-i}=0$, but then $m -i$ should be either $k_1$ or $b$. In the first case $i=m-k_1=i_0$ which contradicts the fact that $i_0 < i$; and in the second one, the inequality $q^{2n-1} + q^{2n-t-1} + 1 + q^t < q^{2n-1} + q^{t-1} + 1 + q^{2n-t}$ proves $m+b < k_t + k_1$, which implies $i_1 = 2m -k_1 - k_t < m-b=i$ and $i$ would not be the required minimum value with $s_i \neq 0$.

{\it Now we conclude the proof of our last statement concerning the case $t=n$.} Reasoning as in the previous paragraphs, one gets two possibilities.

The first one is that Equality (\ref{twostars}) holds for some index $i_1$ such that $a_{m+i_0-i_1}=0$, then $m-i_1=b$ and, since we proved before that in this case $m-i_0=b$, then $i_1=i_0$, which contradicts the fact $i_1 > i_0$.

Otherwise, $m+i_0 -i_1$ should be an element in $\mathrm{Supp}$ of the form $(1+q^n) q^{j}$, for some $0 < j \leq n-2$, because Equality (\ref{twostars}) makes no sense for $j=0$ -except when $q=n=2$- nor for $j=n-1$ (since it would imply that $i_0=i_1$). Then
\[
i_1 = (1+q^n)q^{n-1} +(1+q^n)q^{n-1} - (1+q^n) - (1+q^n) q^{j} = (1+q^n) (2 q^{n-1} -q^{j} -1),
\]
and thus $i_1>m$, proving that there is no such $i_1\leq m$. As a consequence, we conclude that $i_0$ is the only index satisfying $s_{i_0} \neq 0$ when $t=n$ and the case $q=n=2$ does not hold.

Notice that $m+i_0-i_1=1+q^n=b$ if and only if $m+i_0-i_1=m-i_0$, which is equivalent to $i_1=2i_0$ and then $2i_0 \leq m$. This inequality happens if and only if $q^{2n-1} + q^{n-1} \leq 2 q^n +2$, which holds only when $q=n=2$. Therefore, only in this case, we get a new index $i_1$ such that $s_{i_1}=-1=1$ as stated.\\

{\it Step A.3.} Assume $ 1 < t<n$. Iterating our reasoning, define
$$i_2 := \min \{ i  \; | \; i_1 < i \leq m \mbox{ and $s_i \neq 0$}\}.$$ By Lemma \ref{newton} one gets
\begin{equation}
\label{threestars}
a_m s_{i_2}+  \cdots + a_{m+i_1-i_2} s_{i_1} +  \cdots + a_{m+i_0-i_2} s_{i_0}+ \cdots + a_{m-(i_2-1)} s_{1} = -i_2 a_{m-i_2},
\end{equation}
where the main novelty is that one might have three non-vanishing summands on the left hand side of the equality.

Let us study Equality (\ref{threestars}). First we determine the $q$-adic expansion of the value $i_1$.

\begin{lem}
\label{lemma1}
With the above notation, the $q$-adic expansion of $i_1$ is that displayed in Table \ref{QA3}.
\begin{table}[ht]
\centering
\begin{tabular}{|c|c|c|c|c|c|c|c|c|c|c|c|c|}
  \hline
   & $q^{0}$ & $\cdots$ & $q^{t-2}$ & $q^{t-1}$ & $q^{t}$ &  $\cdots$ & $q^{2n-t-2}$ & $q^{2n-t-1}$ & $q^{2n-t}$ & $\cdots$ & $q^{2n-2}$ & $q^{2n-1}$ \\ \hline
  $(i_1)_q$ & $q-1$ & $\ldots$ & $q-1$ & $q-2$ &  $q-1$ & $\ldots$ &  $q-1$ & $1$  &$q-1$ &$\ldots$ & $q-1$ &  $0$\\
  \hline
\end{tabular}
\caption{$q$-adic expansion of $i_1$}
\label{QA3}
\end{table}
\end{lem}

\begin{proof}
We start with the following chain of equalities
\begin{multline}
\label{dellemma}
i_1= (m-k_1)+(m-k_t) = q^{2n-1} + q^{2n-1-t} -\left(1 + q^{2n-t}\right) + q^{2n-1} + q^{2n-1-t} -\left(q^{2n-1}+ q^{t-1}\right)\\
=q^{2n-1}+ q^{2n-1-t} - \left(1 + q^{2n-t}\right)+ q^{2n-1-t} - q^{t-1}:=w.
\end{multline}
Noticing that $q^{2n-1} = (q-1)q^{2n-2} + (q-1)q^{2n-3} + \cdots + (q-1)q + q$, one gets that the value $w$ in (\ref{dellemma}) equals
\begin{multline*}
(q-1)q^{2n-2}+ \cdots + (q-2)q^{2n-t} + (q+1)q^{2n-t-1} + (q-1)q^{2n-t-2} + \cdots \\ +(q-1)q^{t} + (q-2)q^{t-1} + \cdots + (q-1),
\end{multline*}
which ends the proof.
\end{proof}

Next we study the index $i_2$ involved in Equality (\ref{threestars}).
\begin{lem}
\label{eralema2}
There is only an index $i_2^{'}>i_1$ such that $a_{m+i_1-i_2^{'}} \neq 0$. With the notation as before Theorem \ref{El7}, this index satisfies $m- i_2^{'} = j_{2,0}$.
\end{lem}
\begin{proof}
Since $i_2^{'}>i_1$, there exists a positive integer $j^{'}_2<k_1+k_t-m$ such that $i_2^{'} = m - j^{'}_2$. Then $m + i_1- i_2^{'} = i_1 + j^{'}_2<m$. By Lemma \ref{lemma1}, $k=k_t$ is the unique value $k= m + i_1- i_2^{'}<m$ as in Proposition \ref{La5} that can be obtained with indices $j^{'}_2<k_1+k_t-m$. This is because the last coordinate of the $q$-adic expansion $(k)_q$ of the remaining values $k$ in Proposition \ref{La5} vanishes. By inspection, we deduce that the $q$-adic expansion of $j^{'}_2$ is that given in Table \ref{QA4} and thus, with the notation as before the statement of Theorem \ref{El7}, $j^{'}_2 = j_{2,0}$.
\begin{table}[ht]
\centering
\begin{tabular}{|c|c|c|c|c|c|c|c|c|c|c|c|c|}
  \hline
   & $q^{0}$ & $q^{1}$ & $\cdots$ & $q^{t-2}$& $q^{t-1}$ & $q^{t}$ & $\cdots$ & $q^{2n-t-2}$ & $q^{2n-t-1}$ & $q^{2n-t}$& $\cdots$ & $q^{2n-1}$ \\ \hline
  $(j^{'}_2)_q$ & $1$ &  $0$ & $\ldots$ & $0$ & $2$ &  $0$ &  $\ldots$ & $0$ & $q-2$ & $0$ & $\ldots$ &  $0$\\
  \hline
\end{tabular}
\caption{$q$-adic expansion of $j^{'}_2$}
\label{QA4}
\end{table}
\end{proof}

\begin{lem}
\label{elonce}
The above introduced index $i_2 = \min \{ i  \; | \; i_1 < i \leq m \mbox{ and $s_i \neq 0$}\}$ equals $m- j_{2,0} := i_{2,0} $. In addition, $s_{i_{2,0}} = 1$.
\end{lem}
\begin{proof}
Assume first that $q>2$. Define $j_2=m-i_2$, then
\begin{equation}
\label{dejota2}
j_2 < m-i_1=k_1 + k_t - m = 1 + (q-1) q^{2n-1-t} + q^{t-1}:=\theta.
\end{equation}
The $q$-adic expansion of $\theta$ in the above equality proves that there is no index $j$, $j_{2,0} < j < k_1+k_t -m$ such that $s_{m-j} \neq 0$. In fact, write $i =m -j$, by Lemma \ref{eralema2}, we have that $a_{m+i_1-i} = 0$; in addition $ia_{m-i} =0$ because either $i$ is a multiple of $q$ or otherwise $a_{m-i}=0$. This is because inspecting the $q$-adic expansion of $j_{2,0}$ and the expression (\ref{dejota2}), we notice that there is no $j$ as required with a $q$-adic expansion having only two ones; in fact $j$ should have a $q$-adic expansion of the form $1 + (q-2)q^{2n-t-1} + \mbox {other terms}$. Finally, $a_{m+i_0-i}=a_{m-k_1+j}=0$ since all the coefficients of the $q$-adic expansion of $m-k_1$ are $q-1$ except those of $q^{2n-1}$ and $q^{2n-1-t}$. Therefore, considering Equality (\ref{threestars}) with $i$ instead of $i_2$, we get $s_{m-j} = 0$. As a consequence, $i_2 = m - j_{2,0} := i_{2,0}$ is our candidate for satisfying $s_{i_2} \neq 0$. Let us show that, indeed, $s_{i_{2,0}} \neq 0$.

Equality (\ref{threestars}) reads
\begin{equation*}
 s_{m- j_{2,0}}+  \cdots + a_{k_t} s_{i_1} +  \cdots + a_{m-k_1+j_{2,0}} s_{i_0} = -(m-j_{2,0}) a_{j_{2,0}}
\end{equation*}
and we know that $a_{k_t}=1$,  $s_{i_1}= -1$. Now,
\begin{multline}
m-k_1+j_{2,0} = q^{2n-1} + q^{2n-1-t} - q^{2n-t} -1 + (q-2)q^{2n-1-t} + 2 q^{t-1} +1  \\
= q^{2n-1} + q^{2n-1-t}  -q^{2n-t} -1  + q^{2n-t} - 2 q^{2n-t-1} + 2 q^{t-1} +1 =  q^{2n-1}  -q^{2n-t-1} + 2 q^{t-1} \\
= (q-1) q^{2n-2} + (q-1) q^{2n-3} + \cdots + (q-1)q + q - q^{2n-t-1} + 2 q^{t-1},
\end{multline}
getting a $q$-adic expansion as in Table \ref{QA5}.
\begin{table}[ht]
\centering
\begin{tabular}{|c|c|c|c|c|c|c|c|c|c|c|c|}
  \hline
   & $q^{0}$ & $\cdots$ & $q^{t-2}$ &$q^{t-1}$ & $q^t$ &$\cdots$  & $q^{2n-t-2}$ & $q^{2n-t-1}$ &  $\cdots$ & $q^{2n-2}$ & $q^{2n-1}$ \\ \hline
  $(m-k_1+j_{2,0})_q$ & $0$ &  $\ldots$ & $0$& $2$ & 0& $\ldots$ & $0$ & $q-1$ &$\ldots$ & $q-1$ &  $0$\\
  \hline
\end{tabular}
\caption{$q$-adic expansion of $m-k_1+j_{2,0}$}
\label{QA5}
\end{table}

Therefore the $q$-adic expansion of $m-k_1+j_{2,0}$ has more than two non-vanishing entries and then, it cannot be one of the values $k$ described in Proposition \ref{La5}. Thus $a_{m-k_1+j_{2,0}}=0$. Similarly, the $q$-adic expansion of $j_{2,0}$ has three nonvanishing entries and therefore $a_{j_{2,0}}=0$. This concludes the proof of the case $q >2$ and $s_{i_{2,0}} = s_{m-j_{2,0}} = - s_{i_1} = 1$.

When $q=2$, the only indices $i$ such that $s_i \neq 0$ are $i_0$ and $i_1$. This fact can be proved by noticing that, reasoning as above, the unique  candidate $j_2 : = m -i_2 < m - i_1$ such that $s_{i_2} \neq 0$ is  $j_{2,0} = 1 + q^t = b$. Using again Equality (\ref{threestars}) one gets
\begin{equation*}
 s_{m- b}+  \cdots + a_{k_t} s_{i_1} +  \cdots + a_{i_0+b} s_{i_0} = -(m-b) a_{b}.
\end{equation*}
The only unknown value is $a_{i_0 +b}$, and
\begin{equation*}
i_0 +b= q^{2n-1} + q^{2n-1-t} -(q^{2n-t} +1) + q^{t} +1
= q^{2n-1} - q^{2n-t}+q^{2n-t-1} + q^{t}.
\end{equation*}
Now writing again $q^{2n-1} = \sum_{i=1}^{2n-2} (q-1) q^{2n-1-i} + q$, one deduces that the value $i_0+ b$ equals
\begin{multline*}
(q-1)q^{2n-2}+ \cdots + (q-1) q^{2n-t+1}+ (q-2) q^{2n-t}\\
+ [(q-1) q^{2n-t-1} + \cdots + (q-1) q + q] + q^{2n-t-1} + q^t.
\end{multline*}
The value given in square brackets is $q^{2n-t}$ and therefore the $q$-adic expansion of $i_0 +b$ has more than two non-vanishing entries, which means that $a_{i_0+b}=0$. Thus $s_{i_1} = 1$ and $ -(m-b) a_b =1$ proving that $s_{m-b} =0$. Notice that in this case the characteristic of the ground field is two.
\end{proof}

{\it Step A.4.} To finish the proof of our Theorem \ref{El7} when $t \neq 1$, it suffices to reason as before. That is to say, in the next step define $i_{2,1}:= \min \{ i  \; | \;  i_{2,0} < i \leq m  \mbox{ and $s_i \neq 0$}\}$; again one gets an equality similar  to Equality (\ref{threestars}) and, as we will see later, $a_{m+i_{2,0}-i_{2,1}} \neq 0$  is the only feasible  possibility, then $m+i_{2,0}-i_{2,1} = k_t$ and thus,
$$i_{2,1}= m-(k_t-m+j_{2,0})= m - j_{2,1}= m- (1 + 3 q^{t-1} + (q-3)q^{2n-t-1}).$$
Iterating the reasoning, one obtains candidates $i_{2,\ell}$, $0 \leq \ell \leq q-2$, as in the statement (note that these indices satisfy $i_{2,l}\leq m$, which is equivalent to $j_{2,l}\geq 0$).

We start by computing the values $s_{i_2,\ell}$, $1 \leq \ell \leq q-3$. Recall that $s_{i_2,0}=1$ and note that we assume $q\geq 3$ since we have obtained all the indices $i$ with $s_i\neq 0$ (and the values $s_i$) in the case $q=2$. By Lemma \ref{newton}, the following equality holds:
\begin{multline}
\label{starA}
a_m s_{i_{2,\ell}}+ a_{m+i_{2,\ell-1}-i_{2,\ell}} s_{i_{2,\ell-1}}+ \cdots  \\ + a_{m+i_{2,0}-i_{2,\ell}} s_{i_{2,0}} + a_{m+i_{1}-i_{2,\ell}} s_{i_{1}}+
a_{m+i_{0}-i_{2,\ell}} s_{i_{0}}
= -i_{2,\ell} a_{m- i_{2,\ell}}.
\end{multline}
Consider an index $0 \leq \ell' < \ell$. Then, one gets the chain of equalities
\[
m+ i_{2,\ell'}- i_{2,\ell} = m +(\ell- \ell') q^{t-1} - (\ell- \ell') q^{2n-t-1} = (\ell- \ell') q^{t-1} - (\ell- \ell' -1) q^{2n-t-1} + q^{2n-1},
\]
which proves that $a_{m+i_{2,\ell-1}-i_{2,\ell}}=a_{k_t}$.

The right hand side of Equality (\ref{starA}) vanishes because $a_{m- i_{2,\ell}} = a_{j_{2, \ell}} =0$, which holds since the $q$-adic expansion of $j_{2,\ell}$ does not coincide with any element in Table \ref{TTableoplus}.

Next we are going to show that, with the exception of the first two summands, every summand in the left hand side of Equality (\ref{starA}) vanishes. {\it In this case $s_{i_{2,\ell}} + s_{i_{2,\ell-1}}=0$ and we obtain the values of the indices $s_i$ as in the statement.}

We start  by proving that $a_{m+i_{2,\ell'}-i_{2,\ell}}=0$ whenever $0 \leq \ell' < \ell$ and $\ell' \neq \ell -1$. In this case, $1 < \ell - \ell' \leq q-3$ and the $q$-adic expansion of $m+i_{2,\ell'}-i_{2,\ell}$ is
\[
(\ell - \ell') q^{t-1} - (\ell - \ell' -1) q^{2n-t-1} + q^{2n-1},
\]
which has an expansion with a summand $(q-1)q^{2n-2}$. This implies that $m+i_{2,\ell'}-i_{2,\ell}$ is not an element in Table \ref{TTableoplus} and thus $a_{m+i_{2,\ell'}-i_{2,\ell}}=0$.

We prove now that $a_{m+i_{1}-i_{2,\ell}}=0$. Notice that $m+i_{1}-i_{2,\ell} = i_1+ j_{2,\ell}$. Table \ref{QA3} shows the $q$-adic expansion of $i_1$ and then, the summand corresponding to the least power of $q$ in the $q$-adic expansion of $i_1+ j_{2,\ell}$ is $(l+1)q^{t-1}$. Since $1 \leq \ell \leq q-3$, this last $q$-adic expansion does not coincide with any expansion in Table \ref{TTableoplus}, proving that $a_{m+i_{1}-i_{2,\ell}}=0$.

To conclude the proof of the computation of $s_{i_{2,\ell}}$, $1 \leq \ell \leq q-3$, it only remains to check whether  $a_{m+i_{0}-i_{2,\ell}}$ vanishes. Indeed,
\begin{multline}
m+i_{0}-i_{2,\ell} = i_0 + j_{2,\ell} = i_0 + 1 + (2+ \ell)q^{t-1} + (q-2-\ell) q^{2n-t-1}\\ = q^{2n-1}+q^{2n-t-1}-1-q^{2n-t} + 1 + (2+ \ell)q^{t-1} + (q-2-\ell) q^{2n-t-1}\\ = q + (q-1) q + \cdots + (q-1)q^{2n-2} + (2+ \ell)q^{t-1} +(q- \ell -1) q^{2n-t-1} -q^{2n-t}\\ = (2+\ell) q^{t-1} +(q-\ell -1) q^{2n-t-1} + (q-1)q^{2n-t} + \cdots +(q-1)q^{2n-2}.
\end{multline}
Then, the first summand in the $q$-adic expansion of $i_0 + j_{2,\ell}$ is $(2+\ell) q^{t-1}$ showing that $a_{m+i_{0}-i_{2,\ell}}=0$ by Table \ref{TTableoplus}.

To finish, we determine the value $s_{i_{2,q-2}}$. Then, again by Lemma \ref{newton}, one has
\begin{equation}
\label{delta}
a_m s_{i_{2,q-2}}+ a_{m+i_{2,q-3}-i_{2,q-2}} s_{i_{2,q-3}}+ \cdots +a_{m+i_{0}-i_{2,q-2}} s_{i_{0}}
= -i_{2,q-2} a_{m- i_{2,q-2}},
\end{equation}
where $m-i_{2,q-2}=j_{2,q-2} =b$. Then $-i_{2,q-2} a_{m- i_{2,q-2}} =1$ and, as we proved before, only the first two summands in the left hand side of Equality (\ref{delta}) do not vanish. Then when the characteristic of the ground field is odd, one gets $s_{i_{2,q-2}} +1 = 1$ and thus $s_{i_{2,q-2}} =0$. Otherwise (the characteristic of the ground field is $2$), $s_{i_{2,q-2}} -1 = 1$ and, as well, $s_{i_{2,q-2}} =0$.

{\it Then, we have proved that $q-3$ is the largest index $\ell$ such that $s_{i_{2, \ell}} \neq 0$.}\\

{\it It remains to prove that that $s_i=0$ whenever $m \geq i \geq i_{2,0}$ and $i \neq i_{2,\ell}$, $0 \leq \ell \leq q-3$.}

We start by proving that for any $\ell$ as above, $s_{i_{2,\ell} + j} = 0$ whenever $0 < j < q^{2n-t-1} - q^{t-1}$. Set $f:= i_{2,\ell} - i_{2,\ell-1} = q^{2n-t-1} - q^{t-1}$. By Lemma \ref{newton}, it holds the following equality:
\begin{multline}
\label{noil}
a_m s_{i_{2,\ell}+j}+ a_{m-j} s_{i_{2,\ell}} + a_{m-j-f} s_{i_{2,\ell-1}}+ a_{m-j-2f} s_{i_{2,\ell-2}} + \cdots \\
+ a_{m-j-\ell f} s_{i_{2,0}} + a_{m + i_1 - i_{2,\ell} -j} s_{i_{1}} +
a_{m+i_{0}-i_{2,\ell}-j} s_{i_{0}}
= -(i_{2,\ell} +j) a_{m- i_{2,\ell}- j}.
\end{multline}

Consider an integer $\alpha$ such that $0 \leq \alpha \leq \ell \leq q-3$, then
\[
m-j - \alpha f = q^{2n-t-1} + q^{2n-1} - \alpha (q^{2n-t-1} - q^{t-1}) - j = q^{2n-1} + (1 - \alpha) q^{2n-t-1} + \alpha q^{t-1} -j.
\]
Since $0 < j < q^{2n-t-1} - q^{t-1}$, one gets
\[
q^{2n-1} - \alpha q^{2n-t-1} + (\alpha+1)q^{t-1}< m-j-\alpha f < q^{2n-1} + (1-\alpha)q^{2n-t-1} + \alpha q^{t-1},
\]
and the $q$-adic expansion of $m-j - \alpha f$ contains either the summand $(q-1) q^{2n-2}$ or $q^{2n-1}$. Then, it can be neither $m$ nor $k_t$. This proves that, for all $\alpha$ and $j$ as before, $a_{m-j-\alpha f} = 0$ by Table \ref{TTableoplus}.

Next we prove that  $a_{m + i_1 - i_{2,\ell} -j}$ vanishes. Indeed,
\begin{multline*}
m + i_1 - i_{2,\ell} -j = m + 2m -k_1 -k_t -(m - j_{2, \ell})-j = 2m + j_{2, \ell}-k_1-k_t-j \\ =
2(q^{2n-1} + q^{2n-1-t}) + (1 + (2+ \ell)q^{t-1} + (q-2- \ell)q^{2n-t-1}) - (1 + q^{2n-t}) - (q^{t-1} + q^{2n-1}) - j \\ =
q^{2n-1} - q^{2n-t} + (q- \ell) q^{2n-t-1} + (1+ \ell)q^{t-1}-j.
\end{multline*}
Since $0 < j < q^{2n-t-1} - q^{t-1}$, one gets that
\[
q^{2n-1} - q^{2n-t} + (q- \ell-1) q^{2n-t-1} + (2+ \ell)q^{t-1}
\]
and
\[
q^{2n-1} - q^{2n-t} + (q- \ell) q^{2n-t-1} + (1+ \ell)q^{t-1}
\]
are a lower and an upper bound on the values $m + i_1 - i_{2,\ell} -j$. Then, the $q$-adic expansion of $m + i_1 - i_{2,\ell} -j$ has a summand  $(q-1)q^{2n-2}$ if $\ell \neq 0$. When $\ell=0$, the above mentioned $q$-adic expansion has a summand $q^{2n-1}$ and it must be lower than $k_t$. In any case, $a_{m + i_1 - i_{2,\ell} -j}=0$ for all  $j$.

The value $a_{m + i_0 - i_{2,\ell} -j}$ is also zero.  In fact,
\begin{multline*}
m + i_0 - i_{2,\ell} -j = m +(m-k_1)-(m-j_{2,\ell}) -j =m+j_{2,\ell}-k_1-j \\ =
q^{2n-1}-q^{2n-t} + (q- \ell -1) q^{2n-1-t} + (2 + \ell) q^{t-1} -j.
\end{multline*}
Using again that $0 < j < q^{2n-t-1} - q^{t-1}$, we see that $$q^{2n-1}-q^{2n-t} + (q- \ell -1) q^{2n-1-t} + (2 + \ell) q^{t-1}$$ and $q^{2n-1}-q^{2n-t} + (q- \ell -2) q^{2n-1-t} + (3 + \ell) q^{t-1}$ are an upper and a lower bound on the values $m + i_0 - i_{2,\ell} -j$. Then, computing the $q$-adic expansions of both bounds, we deduce that the $q$-adic expansion of $m + i_0 - i_{2,\ell} -j$ has a summand $(q-1)q^{2n-2}$ and thus $a_{m + i_0 - i_{2,\ell} -j}=0$. Moreover, $-(i_{2,\ell}+j)a_{m- i_{2,\ell}- j}=0$, because when $m- i_{2,\ell}- j$ is not a multiple of $q$, $m- i_{2,\ell}- j$ can be neither $b$ nor $k_1$. Then by (\ref{noil}), it holds that $s_{i_{2,\ell} + j} = 0$ whenever $0 < j < q^{2n-t-1} - q^{t-1}$ and $0 \leq \ell \leq q-3$.

In an analogous manner, it can be shown that $s_{i_{2,q-2} + j} = 0$ for $0 < j \leq 1 +q^t$ and \textit{Theorem \ref{El7} is proved when $t \neq 1$}.\\

{\it Case B. $t = 1$.} In this case, $t=1$, $k_t=k_1$ and an ordered set of indices $i$ candidates to satisfy $s_i \neq 0$ is
\[
m-k_1 < 2(m-k_1) < \cdots < q (m- k_1).
\]
Notice that these are the indices given in the statement because $i_0=m-k_1$, $i_1 = (m-k_1)+(m-k_1) = 2(m-k_1)$ and for $0 \leq \ell \leq q-3$,
\[
(\ell+3) (m-k_1) = m - j_{2,\ell}.
\]
Indeed, $(\ell+3)m - (\ell+3)k_1 = m - j_{2,\ell}$ if and only if $(\ell+3)k_1 =(\ell+2)m + j_{2,\ell}$ if and only if
\[
(\ell+3)\left(1 + q^{2n-1}\right) = (\ell+2) \left(q^{2n-2} + q^{2n-1}\right) + (\ell+3) + \left(q-(2+ \ell)\right)q^{2n-2}.
\]

With this new notation, one has to successively apply Lemma \ref{newton} obtaining equalities as follows for $1 \leq \beta \leq q$:
\begin{multline}
\label{starC}
a_m s_{\beta(m-k_1)}+ a_{k_1} s_{(\beta-1)(m-k_1)}+ a_{2k_1-m} s_{(\beta-2)(m-k_1)} + \cdots \\ + a_{(\beta-1)k_1-(\beta-2)m} s_{m-k_1}
= -\beta (m-k_1) a_{m- \beta(m-k_1)}.
\end{multline}
The following lemma will be useful to conclude our proof.
\begin{lem}
\label{dD}
For $ 1 \leq \alpha \leq q$, the values $\alpha k_1 - (\alpha -1) m$ are equal to $q^{2n-2} (q- \alpha +1) + \alpha$. In addition,  the $q$-adic expansion of $\alpha k_1 - (\alpha -1) m$ coincides with that of some element in Table \ref{TTableoplus} if and only if either $\alpha=1$ or $n=2$ and $\alpha=q$.
\end{lem}
\begin{proof}
The proof follows from the sequence of equalities:
\begin{multline}
\alpha k_1 -(\alpha-1)m= \alpha (1 + q^{2n-1})- (\alpha -1)(q^{2n-2}+ q^{2n-1}) = \\
q^{2n-1} - (\alpha -1) q^{2n-2} + \alpha = q^{2n-2}(q- \alpha +1) + \alpha.
\end{multline}
Moreover, $q^{2n-2}(q- \alpha +1) + \alpha = k_1$ when $\alpha =1$ and it equals $q^2+q$ whenever $n=2$ and $\alpha = q$. Otherwise, the value $q^{2n-2}(q- \alpha +1) + \alpha $ cannot be expressed as $q^i+q^{i+1}$, $0 \leq i \leq 2n-2$.
\end{proof}

{\it Step B.1.} First we assume that
\begin{equation}
\label{supo*}
s_{\beta (m-k_1) + j} = 0\; \mbox{for $0 \leq \beta \leq q-1$ and $0 < j < m-k_1$}.
\end{equation}
In a few lines we will prove that this assumption is true.

Now, from Lemma \ref{dD}, we deduce that every summand (with the exception of the first two ones) in the left hand side of Equality (\ref{starC}) vanishes. In addition, $m - \beta (m -k_1) = m - \beta m + \beta k_1= \beta k_1 - (\beta-1) m$. Again by Lemma \ref{dD}, the right hand side of Equality (\ref{starC}) is zero for $\beta\neq 1$ and $1$ for $\beta=1$. This proves that, {\it whenever $1 \leq \beta \leq q$, the value $s_{\beta(m-k_1)}=1$ if $\beta$ is odd and it equals $-1$, otherwise.}\\

{\it Step B.2.} {\it Let us prove our assertion (\ref{supo*}). Also that $s_{q(m-k_1)+j}=0$ for every $j$ such that $0 < j < j_1:=q^{2n-2}-q^{2n-3}+q-1$ and $s_{q(m-k_1)+j_1}\neq 0$.} Note that $j_1\leq m-k_1\leq m-q(m-k_1)$ and that the first inequality is an equality when $n=2$.

Let us prove Assertion (\ref{supo*}). Take $0 < j < m-k_1=q^{2n-2} -1$ and $0\leq \beta \leq q-1$. By Lemma \ref{newton}, one has
\begin{multline}
\label{Omega}
a_m s_{\beta (m-k_1)+j}+ a_{m-j} s_{\beta (m-k_1)} + a_{k_1-j} s_{(\beta-1) (m-k_1)} + a_{2k_1-m-j} s_{(\beta-2) (m-k_1)} +\cdots \\
+ a_{(\beta-1)k_1-(\beta-2)m-j} s_{m-k_1}
= -(\beta (m-k_1) +j) a_{\beta k_1-(\beta-1)m-j}.
\end{multline}
The values $a_{\alpha k_1 - (\alpha-1)m -j}$, $0\leq \alpha \leq q-1$, satisfy:

\[
a_{\alpha k_1 - (\alpha-1)m -j} = \begin{cases}
0 & \text{ if } 0\leq \alpha\leq q-2 \text{ or } \alpha=q-1 \text{ and } j\neq j_1\\
1 & \text{ if } \alpha=q-1 \text{, } j=j_1 \text{ and } n\neq 2.
\end{cases}
\]
Indeed, for $\alpha=0$ the value $a_{m-j}$ vanishes because $k_1<m-j<m$ and $m$ and $k_1$ are the largest indices $i$ such that $a_i \neq 0$. For $1 \leq \alpha \leq q-1$, from the fact $0 < j < q^{2n-2} -1$ and by Lemma \ref{dD}, it holds the following chain of inequalities:
\[
(q-\alpha) q^{2n-2} + \alpha +1 < \alpha k_1-(\alpha-1)m - j < (q-\alpha+1) q^{2n-2} + \alpha.
\]
Then, looking at the $q$-adic expansions of the bounds on $\alpha k_1-(\alpha-1)m-j$ given by the above inequalities, the result is true for every $\alpha$ such that $1 \leq \alpha <q-1$. When $\alpha=q-1$, $\alpha k_1 - (\alpha-1)m -j$ appears in Table \ref{TTableoplus} only when $n\neq 2$ and $\alpha k_1 - (\alpha-1)m -j=q^{2n-2}+q^{2n-3}$, that is, $j=j_1$.

Hence, from Equality (\ref{Omega}) we have $s_{\beta (m-k_1) + j} = 0$ for $0 \leq \beta \leq q-1$ and $0 < j < m-k_1$ except when $\beta=q-1$, $j=j_1$ and $n\neq 2$. In such a case, Equality (\ref{Omega}) is
\[
s_{(q-1)(m-k_1)+j_1} = -((q-1)(m-k_1)+j_1) a_{(q-1)k_1-(q-2)m-j_1},
\]
and the right hand side vanishes because of the characteristic of the field. This proves Assertion (\ref{supo*}).

Now, when $0<j<j_1$, it holds
\[
q^{2n-3}+1 < qk_1-(q-1)m - j < q^{2n-2}+q,
\]
and again $q k_1 - (q-1)m -j$ appears in Table \ref{TTableoplus} only when $n\neq 2$ and $q k_1 - (q-1)m -j=q^{2n-3}+q^{2n-4}$, that is, $j=j_1-q^{2n-4}+1$. Reasoning as before, we get $s_{q(m-k_1)+j}=0$ for all $0<j<j_1$.

Finally, when $j=j_1$, then $qk_1-(q-1)m-j_1=q^{2n-3}+1$ and
\[
a_{q k_1 - (q-1)m -j_1} = \begin{cases}
0 & \text{ if } n\neq 2\\
1 & \text{ if } n = 2.
\end{cases}
\]
If $n\neq 2$, then $j_1<m-k_1$ and we can apply the results above. From Equality (\ref{Omega}) we get
\[
s_{q(m-k_1)+j_1} + a_{(q-1)k_1-(q-2)m-j_1}s_{m-k_1} = 0,
\]
giving rise to the equality $s_{q(m-k_1)+j_1}=-1$ (note that when $q$ is even, $1=-1$). Otherwise, when $n=2$, $j_1=m-k_1$ and $q(m-k_1)+j_1 = (q+1) (m-k_1)$. Then, Lemma \ref{dD} proves
\begin{equation}
\label{starE}
s_{(q+1) (m-k_1)} + a_{k_1} s_{q(m-k_1)} + a_{q k_1 - (q-1)m} s_{m-k_1}
= -(m-b)a_b.
\end{equation}
Here, the last summand of the left hand side of Equality (\ref{starE}) equals one and the right hand side is also equal to one. Therefore, in this case,
\[
s_{(q+1)(m-k_1)} = -s_{q(m-k_1)}= \begin{cases}
-1 & \text{ if } q \text{ is odd}\\
1 & \text{ if } q \text{ is even}.
\end{cases}
\]

{\it We have proved that the set $\Gamma = \{ \ell (m-k_1)\}_{\ell=1}^q \cup \{q(m-k_1)+j_1\}$ is the set of indices $r \leq q(m-k_1) + j_1$ such that $s_r\neq 0$.}\\

{\it Step B.3.} {\it To conclude we show that if $q(m-k_1) + j_1< r \leq m$, then $s_r=0$ except when $q=n=2$, in which case $s_m\neq 0$.}

Suppose that $q=n=2$ does not hold. In this case, $r=q(m-k_1) + j$ where
\begin{equation}
\label{redo*}
q^{2n-2}-q^{2n-3}+q-1=j_1 <j  \leq m-q(m-k_1)=q^{2n-2} +q,
\end{equation}
and by Lemma \ref{newton}, the following equality
\begin{multline}
\label{cuadra}
a_m s_{q (m-k_1)+j}+ a_{m-j+j_1} s_{q(m-k_1)+j_1}+ a_{m-j} s_{q (m-k_1)} + a_{k_1-j} s_{(q-1) (m-k_1)} + \\ a_{2k_1-m-j} s_{(q-2) (m-k_1)} +\cdots
+ a_{(q-1)k_1-(q-2)m-j} s_{m-k_1}
= -(q (m-k_1) +j) a_{m-q(m-k_1)-j}
\end{multline}
holds.

From the inequalities in (\ref{redo*}), we deduce
\[
q^{2n-1}+q^{2n-3}-q+1 > m-j \geq q^{2n-1}-q,
\]
and then, $a_{m-j}\neq 0$ if and only if $m-j=k_1$. Moreover,
\[
q^{2n-1}+q^{2n-2} > m-j+j_1 \geq q^{2n-1}+q^{2n-2} - (q^{2n-3}+1)>k_1,
\]
which proves $a_{m-j +j_1}=0$.

Recalling Lemma \ref{dD}, for $1 \leq \alpha \leq q$, it holds the chain of inequalities:
\[
(q- \alpha) q^{2n-2} + q^{2n-3} -q +(\alpha+1) > \alpha k_1 - (\alpha-1)m -j \geq (q- \alpha) q^{2n-2}  -q +\alpha.
\]
Looking at the $q$-adic expansions of the bounds on $\alpha k_1 - (\alpha-1)m -j$ given by the above inequalities, one deduces that, when $1\leq \alpha<q$, the coefficient of $q^{2n-2}$ in the $q$-adic expansion of $\alpha k_1 - (\alpha-1)m -j$ admits three possibilities: it is different from $0$ and $1$, it is $0$ in which case the coefficient $q-1$ appears in the before mentioned $q$-adic expansion, or it is $1$ but its contiguous term in the $q$-adic expansion is not $1$. This proves that $a_{\alpha k_1 - (\alpha-1)m -j}=0$ for all $1\leq \alpha<q$.

If $m-j\neq k_1$, then the left hand side of Equality (\ref{cuadra}) is equal to $s_{q(m-k_1) m +j}$. If the right hand side of (\ref{cuadra}) is not a multiple of $q$, then neither $m-q(m-k_1)-j$ is and one could get $a_{m-q(m-k_1)-j}\neq 0$ only when $m-q(m-k_1)-j$ equals $k_1$ or $b$. The first situation cannot hold because $m-q(m-k_1)-j<q^{2n-3}+1<k_1$ and the second one contradicts the fact $m-j\neq k_1$. Therefore, $s_{q(m-k_1) m +j}$ vanishes for all index $j$ as above.

Otherwise, if $m-j=k_1$, one gets
\[
s_{(q+1) (m-k_1)} + a_{k_1} s_{q(m-k_1)} = -(m-b)a_b
\]
and then, $s_{(q+1) (m-k_1)}=0$.\\


The only remaining case is $q=n=2$. Then, $m=12$, $k_1=9$, $j_1=3$ and $\Gamma = \{ \ell (m-k_1)\}_{\ell=1}^{q+1}=\{3,6,9\}$ is the set of indices $r \leq q(m-k_1) + j_1=9$ such that $s_r\neq 0$. Now, Equality (\ref{cuadra}) is
\[
s_{6+j}+a_{15-j}s_9+a_{12-j}s_6+a_{9-j}s_3=-(6+j)a_{6-j},
\]
and, then, one deduces the equalities $s_{10}=s_{11}=0$ and $s_{12}=1$.

This concludes the proof of this case $t=1$ and that of Theorem \ref{El7}.
\end{proof}

With the above ingredients, we are ready to provide the parameters of Hermitian self-orthogonal codes of the type  $E_{\Delta, \mathrm{Tr}}$.

\begin{teo}
\label{era2}
Let $q$ be a prime power. Consider the polynomial $\mathrm{Tr}(X) =\mathrm{Tr}_b(X)$, where $b =1 + q^t$ with $0< t \leq n$, $n$ being a positive integer. Assume that $(q,n,b) \neq (2,2,3)$ is a triple satisfying Property (\ref{A}). Define $A(q,t)$ as follows:
\[
A(q,t):=
\begin{cases}
q^n - \lceil \frac{q-1}{2} \rceil q^{n-1} - \lceil \frac{q-1}{2} \rceil q^{n-t-1} -2 & \text{ if } 0 <t \leq \frac{n}{2}, \\
q^n - \lceil \frac{q-1}{2} \rceil q^{n-1} - \lceil \frac{q-1}{2} \rceil q^{t-1} -2 & \text{ if } \frac{n}{2} <t < n, \\
q^{n-1}-2 & \text{ if } t=n,
\end{cases}
\]
whenever $q \neq 2$, and
\[
A(q,t):=
\begin{cases}
2^n - 2^{t-1} -2 & \text{ when } q=2 \text{ and } t <n,\\
2^{n-1} -2 & \text{ when } q=2 \text{ and } t =n.
\end{cases}
\]
For any nonnegative integer $\tau$, define $\Delta(\tau):=\{a \in \mathbb{Z}\; |\; 0 \leq a \leq \tau\}$. Then, if $\tau \leq A(q,t)$, the linear code $E_{\Delta(\tau), \mathrm{Tr}}$, over $\mathbb{F}_{q^{2n}}$, has length $m = q^{2n-t-1} + q^{2n-1}$ and satisfies:
  \begin{description}
    \item[i)] It is Hermitian self-orthogonal, that is
    \[
    E_{\Delta(\tau), \mathrm{Tr}} \subseteq \left( E_{\Delta(\tau), \mathrm{Tr}} \right)^{\perp_h}.
    \]
    \item[ii)] Its dimension is $\tau +1$ and the minimum distance of $\left( E_{\Delta(\tau), \mathrm{Tr}} \right)^{\perp_h}$ is larger than or equal to $\tau +2$.
  \end{description}
\end{teo}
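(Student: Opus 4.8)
The plan is to split the statement into its two parts: (ii) will follow from elementary Reed--Solomon theory, while (i) will be reduced to a statement about the power sums $s_i$ and, ultimately, to $q$-adic arithmetic governed by Theorem~\ref{El7}.

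First I would dispose of (ii). Since $\tau\le A(q,t)\le q^{n}-2<m$, the map sending a polynomial of degree $\le\tau$ to its vector of values at the $m$ distinct roots $\beta_1,\dots,\beta_m$ is injective, so $E_{\Delta(\tau),\mathrm{Tr}}$ has dimension $\tau+1$; moreover any $\tau+1$ of its coordinate columns form an invertible Vandermonde matrix, so $E_{\Delta(\tau),\mathrm{Tr}}$ is MDS with parameters $[m,\tau+1,m-\tau]$. Its Euclidean dual is then MDS with parameters $[m,m-\tau-1,\tau+2]$, and $\bigl(E_{\Delta(\tau),\mathrm{Tr}}\bigr)^{\perp_h}$ is the image of that Euclidean dual under the coordinatewise field automorphism $x\mapsto x^{q^{n}}$, hence has the same weight distribution; in particular its minimum distance is $\ge\tau+2$.

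For (i) the key identity is $\mathrm{ev}_{\mathrm{Tr}}(X^i)\cdot_h\mathrm{ev}_{\mathrm{Tr}}(X^j)=\sum_{l=1}^{m}\beta_l^{\,i+jq^{n}}=s_{(i+jq^{n})\bmod(q^{2n}-1)}$, with the convention $s_0:=m$. Because $\tau\le A(q,t)\le q^{n}-2$ one checks $i+jq^{n}<q^{2n}-1$, so no reduction modulo $q^{2n}-1$ is needed, and because $m\equiv0\pmod p$ in the non-vacuous range $n\ge2$ the pair $(0,0)$ is harmless; hence $E_{\Delta(\tau),\mathrm{Tr}}$ is Hermitian self-orthogonal if and only if $s_{i+jq^{n}}=0$ for all $0\le i,j\le\tau$. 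Next I would describe $\mathcal S:=\{\,1\le r\le q^{2n}-1:s_r\neq0\,\}$. The set of roots of $\mathrm{Tr}(X)$ is stable under $\beta\mapsto\beta^{q}$ (since $\mathrm{tr}_{2n}(Y^{q})=\mathrm{tr}_{2n}(Y)$, resp.\ $\mathrm{tr}_{n}(Y^{q})=\mathrm{tr}_{n}(Y)$ on $\mathbb{F}_{q^{n}}$) and under $\beta\mapsto\zeta\beta$ for every $b$-th root of unity $\zeta$ (as $\zeta^{b}=1$); therefore $s_{qr}=s_r$, so $\mathcal S$ is a union of $q$-cyclotomic cosets modulo $q^{2n}-1$, and $s_r=0$ whenever $\gcd(b,q^{2n}-1)\nmid r$ (in particular for odd $r$ when $q$ is odd). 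Combining this with Theorem~\ref{El7}, together with the linear recurrence and periodicity of $(s_r)$ from Lemma~\ref{newton} (which are what let one extend Theorem~\ref{El7} to representatives $r>m$), one obtains that $\mathcal S$ is exactly the union of the $q$-cyclotomic cosets of the indices $i_0$, $i_1$, $i_{2,\ell}$ of Theorem~\ref{El7}, whose explicit $q$-adic expansions were recorded in its proof (all digits $q-1$ apart from a few small ``light'' digits whose values add up to $q$).

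Finally the combinatorial core. Since $0\le i,j\le\tau<q^{n}$, the $q$-adic expansion of $r=i+jq^{n}$ is the word made of the $n$ digits of $i$ in positions $0,\dots,n-1$ followed by the $n$ digits of $j$ in positions $n,\dots,2n-1$, and $\mathrm{ev}_{\mathrm{Tr}}(X^i)\cdot_h\mathrm{ev}_{\mathrm{Tr}}(X^j)\neq0$ forces some cyclic rotation of this word to be one of the bad indices above. So it suffices to prove that for every bad index $c$ and every cyclic rotation $c'$ of its $q$-adic word one has $\max\{\mathrm{low}_n(c'),\mathrm{high}_n(c')\}\ge A(q,t)+1$, where $\mathrm{low}_n$, $\mathrm{high}_n$ denote the integers read from the bottom $n$, top $n$ $q$-digits. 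I would do this index by index: for $i_0$ the word has two zeros at cyclic distance $\min(t,2n-t)$ and all other digits $q-1$, and the best-balancing rotation forces one half to stay above $q^{n}-1-(q-1)q^{\min(t-1,\,n-t-1)}$; for $i_1$ and the $i_{2,\ell}$ the light digits (a zero and one or two digits in controlled positions) straddle the two halves, and since their values total $q$, the most that can be removed from a half of value $q^{n}-1$ is $\lceil\tfrac{q-1}{2}\rceil q^{n-1}+\lceil\tfrac{q-1}{2}\rceil q^{e}$ with $e=\min(t-1,n-t-1)$ --- which is precisely where the ceilings in $A(q,t)$ come from --- yielding the required bound. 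The sub-cases $t=1$, $t=n$, and $q=2$ (no $i_{2,\ell}$, so $i_0$ alone is decisive) are handled in the same spirit; the triple $(q,n,b)=(2,2,3)$ is excluded by hypothesis, and the remaining exceptional triples of Theorem~\ref{El7} are harmless since there $A(q,t)\le0$ and only $\tau=0$ is admissible. The hard part will be exactly this last optimisation: running, uniformly in $q,n,t$, over all $\le 2n$ rotations of each of the (at most) $q+1$ bad words and identifying the worst half --- the $i_{2,\ell}$ being the delicate ones --- together with the technically routine but genuinely needed verification in the previous paragraph that $\mathcal S$ contains no element outside the listed cosets, even among representatives exceeding $m$.
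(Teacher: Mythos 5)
Your treatment of (ii) is correct and in fact sharper than the paper's (which defers to \cite{Traza}): the MDS/Vandermonde argument plus the identity $C^{\perp_h}=(C^{\perp_e})^{q^n}$ is sound, and your handling of the pair $(0,0)$ via $m\equiv 0 \pmod p$ is a detail worth making explicit. The reduction of (i) to the vanishing of $s_{a+bq^n}$ is also exactly the paper's starting point, and your observations that $\mathcal S$ is a union of $q$-cyclotomic cosets (Frobenius-invariance of the root set) and that $\gcd(b,q^{2n}-1)\mid r$ whenever $s_r\neq 0$ are correct and not in the paper; they are a genuinely nice organizing principle.

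However, the proposal has two genuine gaps, and they are precisely the two steps that constitute the bulk of the paper's proof. First, your claim that $\mathcal S$ is \emph{exactly} the union of the cyclotomic cosets of $i_0,i_1,i_{2,\ell}$ is asserted, not proved. Coset-invariance plus Theorem \ref{El7} only controls cosets that meet $[1,m]$; for $q\geq 3$ there do exist $q$-cyclotomic cosets modulo $q^{2n}-1$ lying entirely above $m$ (e.g.\ for $q=3$, $n=2$, $t=1$ the coset $\{53,71,77,79\}$ modulo $80$ with $m=36$), and excluding these from $\mathcal S$ requires the iterative analysis of the second identity in Lemma \ref{newton} --- the paper's construction of the candidate set $\mathcal J$ by repeatedly forming $\beta-\alpha+m$ --- which you compress into one clause and label ``technically routine''; it is not, and your divisibility observation does not suffice to dispatch it. (For $q=2$ the claim does happen to be automatic, since every coset modulo $2^{2n}-1$ has a representative below $2^{2n-1}\leq m$.) Second, the final optimization --- showing that every rotation of every bad word has one half exceeding $A(q,t)$ --- is the computation that actually produces the formula for $A(q,t)$, and you explicitly defer it (``the hard part will be exactly this last optimisation''); moreover your sketch of it is off in two places: the light digits of $i_1$ and $i_{2,\ell}$ sum to $q-1$, not $q$, and the secondary exponent should be $e=\max(t-1,n-t-1)$, not the minimum (for $t\leq n/2$ the theorem's term is $q^{n-t-1}$, which is the \emph{larger} of $q^{t-1},q^{n-t-1}$). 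As it stands the proposal is a plausible plan whose load-bearing verifications remain to be carried out.
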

\begin{proof}
\textit{We first carry out the proof in the case $t \neq n$}.

By Proposition \ref{grado}, $m = q^{2n-t-1} + q^{2n-1}$ is the length of the code $E_{\Delta(\tau), \mathrm{Tr}}$. With the above notation, to show Item i), we have to prove that $\mathrm{ev}_{\mathrm{Tr}}(X^a) \cdot_{h} \mathrm{ev}_{\mathrm{Tr}}(X^b) = 0$ whenever $a, b \leq A(q,t)$. Now
$$
\mathrm{ev}_{\mathrm{Tr}}(X^a) \cdot_{h} \mathrm{ev}_{\mathrm{Tr}}(X^b) = \sum_{j=1}^m \beta_j^{a+q^nb} = s_{a+q^nb}.
$$
Define $\mathcal{I}:= \{i \; | \; 1 \leq i < q^{2n} \mbox{ and $s_i \neq 0$}\}$. With the notation as in Theorem \ref{El7}, set $$\mathcal{I}_1=\{i_0,i_1\} \cup \{i_{2,0},i_{2,1}, \ldots, i_{2,q^*} \},$$ where $q^* =q-3$ if $t \neq 1$ and $q^*=q-2$ otherwise. Theorem \ref{El7} shows that $$\mathcal{I}_1 = \{i \; | \; 1 \leq i \leq m  \mbox{ and $s_i \neq 0$} \}$$ and, clearly, $\mathcal{I}_1 \subseteq \mathcal{I}$.

To prove Item i), we are going to determine a set $\mathcal{J} \supseteq \mathcal{I}$ of candidates $i$ to satisfy $s_i \neq 0$. Since we have obtained the indices $r \leq m$ such that $s_r \neq 0$, we look for indices $r >m$ with this last property. Applying the second part of Lemma \ref{newton}, we get
\[
a_m s_r + a_{m-1} s_{r-1} + \cdots + a_0 s_{r-m} = 0,
\]
so the indices $r>m$ such that $s_r\neq 0$ fulfill $r-m=\beta-\alpha$ for some $\alpha$ and $\beta$ such that $a_\alpha\neq 0$ and $s_\beta\neq 0$. Otherwise, $s_r=0$ by the above formula. Therefore, for obtaining our set of candidates $\mathcal{J}$, we have to consider the elements in $\mathcal{I}_1$ and append those indices $r$ iteratively obtained by the formula $\beta - \alpha + m$, where $\beta$ is some previously obtained element in $\mathcal{J}$ (i.e., $\beta$ is in $\mathcal{I}_1$ or it is a new candidate given by the procedure we are describing) and $\alpha$ is one of the indices (different from $m$)  appearing in Table \ref{TTableoplus}.

The $q$-adic expansion of the indices in $\mathcal{I}_1$ (with the exception of the last one when $t=1$) can be obtained as follows: $i_0 = (q^{2n-1} -1)- (q-1)q^{2n-1-t}$ and then, all the coefficients in its $q$-adic expansion are $q-1$ with the exception of those of $q^{2n-1}$ and $q^{2n-1-t}$, which are zero. The remaining values $i_1, i_{2,0}, \ldots, i_{2,q-3}$ are obtained from the previous one  by adding $q^{2n-1-t} - q^{t-1}$. Thus the coefficients in the $q$-adic expansion  of the elements in $\mathcal{I}_1$ successively decrease one unit with respect to $q^{t-1}$ and increase one unit with respect to $q^{2n-1-t}$. See the forthcoming Table \ref{longtable}, where we give the $q$-adic expansions of the above indices and of an index $j_0$ which will be used later. Each $q$-adic expansion starts in the first part of the table and continues in the corresponding line of the second one.

Now we consider the $q^n$-adic expansion of each index $i \in \mathcal{J}$. It is expressed as  $$i=i(0) + i(1) q^n,$$ with $i(0)$ and $i(1)$ nonnegative integers lower than $q^n$.

We only need  to prove that
\begin{equation}
\label{estrellita}
\text{ if } a,  b \leq A(q,t), \text{ then } a + b q^n \not \in \mathcal{J}.
\end{equation}

For the following reasoning, see Table \ref{longtable}. {\it Assume $t \neq n$ and $q\neq 2$}.

\begin{table}[h]
\begin{tabular}{|c|c|c|c|c|c|c|c|c|c|c|c !{\color{red}\vrule} !{\color{red}\vrule} c|}
  \hline
   & $q^{0}$ & $q^{1}$ & $\cdots$ & $q^{t-1}$ & $q^{t}$ & $\cdots$ & $q^{n-t-1}$ &$q^{n-t}$  & $\cdots$ & $q^{n-2}$ & $q^{n-1}$ & $\color{red}{\rightarrow}$\\ \hline
  $i_0$ & $q-1$ & $q-1$ & $\ldots$ & $q-1$ & $q-1$ & $\ldots$ & $q-1$ & $q-1$ & $\ldots$ & $q-1$ & $q-1$ &\\
  \hline
  $i_1$ & $q-1$ & $q-1$ & $\ldots$ & $q-2$ & $q-1$ & $\ldots$ & $q-1$ & $q-1$ & $\ldots$ & $q-1$ & $q-1$ & \\\hline
  $\vdots$ & $\vdots$ & $\vdots$ & $\ldots$ &  $\vdots$ &  $\vdots$ & $\ldots$ & $\vdots$& $\vdots$&  $\ldots$ &  $\vdots$ & $\vdots$ & \\ \hline
  $i_{2,q-3}$& $q-1$ & $q-1$ & $\ldots$ & $0$ & $q-1$ & $\ldots$ & $q-1$ & $q-1$ & $\ldots$ & $q-1$ & $q-1$ & \\ \hline
  $j_0$ & $q-1$ & $q-1$ & $\ldots$ & $q-1$ & $q-1$ & $\ldots$ & $\lfloor\frac{q-1}{2}\rfloor$ & $q-1$ & $\ldots$ & $q-1$ & $\lfloor\frac{q-1}{2}\rfloor$ & \\ \hline
\end{tabular}\hspace{0pt}%
\begin{tabularx}{0.983\textwidth}{|>{\centering\arraybackslash}X !{\color{red}\vrule} !{\color{red}\vrule} >{\centering\arraybackslash} X | >{\centering\arraybackslash} X | >{\centering\arraybackslash} X | >{\centering\arraybackslash} X | >{\centering\arraybackslash} X |>{\centering\arraybackslash} X |>{\centering\arraybackslash} X|>{\centering\arraybackslash} X|>{\centering\arraybackslash} X|>{\centering\arraybackslash} X|>{\centering\arraybackslash} X|}
  \hline
   $\color{red}{\rightarrow}$ & $q^{n}$ & $q^{n+1}$ & $\cdots$ & $q^{n+t-1}$ & $q^{n+t}$ & $\cdots$ & $q^{2n-t-1}$ &$q^{2n-t}$  & $\cdots$ & $q^{2n-2}$ & $q^{2n-1}$\\ \hline
   & $q-1$ & $q-1$ & $\ldots$ & $q-1$ & $q-1$ & $\ldots$ & $0$ & $q-1$ & $\ldots$ & $q-1$ & $0$ \\
  \hline
   & $q-1$ & $q-1$ & $\ldots$ & $q-1$ & $q-1$ & $\ldots$ & $1$ & $q-1$ & $\ldots$ & $q-1$ & $0$ \\\hline
   & $\vdots$ & $\vdots$ & $\ldots$ &  $\vdots$ &  $\vdots$ & $\ldots$ & $\vdots$& $\vdots$& $\ldots$ &  $\vdots$ & $\vdots$\\ \hline
   & $q-1$ & $q-1$ & $\ldots$ & $q-1$ & $q-1$ & $\ldots$ & $q-1$ & $q-1$ & $\ldots$ & $q-1$ & $0$\\ \hline
   & $q-1$ & $q-1$ & $\ldots$ & $q-1$ & $q-1$ & $\ldots$ & $\lfloor\frac{q-1}{2}\rfloor$ & $q-1$ & $\ldots$ & $q-1$ & $\lfloor\frac{q-1}{2}\rfloor$\\ \hline
 \end{tabularx}\hspace{0pt}%
\caption{$q$-adic expansions of the candidates in $\mathcal{I}_1$ and $j_0$ when $1<t\leq \frac{n}{2}$}
\label{longtable}
\end{table}

Notice that, as said, $i_0(1), i_1(1)$ and $i_{2, \ell}(1)$, $0 \leq \ell \leq q-3$, are positive integers lower than $q^{n-1}$. 
To obtain the values in $\mathcal{J}$, we have to add $m= q^{2n-1} + q^{2n-1-t}$ to every previous value in $\mathcal{J}$ and subtract an index different from $m$ appearing in Table \ref{TTableoplus}. In addition, in each step, we start with an index $\beta$ with $q^n$-adic expansion $i(0) + i(1) q^n$  to get the $q^n$-adic expansion of the next one: $j(0) + j(1) q^n=\beta-\alpha+m$, where $j(1) > i(1)$ and $j(0) \leq i(0)$.

\textit{Suppose now that $1 \leq t \leq \frac{n}{2}$}. Our first step is to define a bound $A'(q,t)\geq A(q,t)$, that eases the understanding of the definition of $A(q,t)$. We desire that every $i\in\mathcal{J}$ fulfills the following statement:
\begin{equation}
\label{cota}
\text{ if } \min\{i(0),i(1)\}\leq A'(q,t), \text{ then } \max\{i(0),i(1)\}>A'(q,t),
\end{equation}
because then (\ref{estrellita}) is proved for $A'(q,t)$ instead of $A(q,t)$ (and therefore for $A(q,t)$) and thus Theorem \ref{era2} i) holds. The reason for choosing $A(q,t)$ is that, as we will see next, the expression of $A'(q,t)$ differs depending on whether or not
$q-1$ is even.

The bound $A'(q,t)=\sum_{k=0}^{q^{n-1}}A'_kq^k$ is given by $\min\{\max\{i(0),i(1) \mid i\in\mathcal{J}\}-1$. To get this minimum, the elements $i\in\mathcal{J}$ to be considered are of the form:
\begin{equation}\label{bound}
    i=i'+xm-x\alpha,
\end{equation}
$i'\in\mathcal{J}$, $\alpha=q^{n-1}+q^{n-t-1}$ or $\alpha=q^{n-1}+q^{n+t-1}$ and $x<q$ is a positive integer such that, in the expression $|i(0)-i(1)|=\sum_{k=0}^{q^{n-1}}h_kq^k$, $h_{q^{n-1}}$ is either $0$ (if $q-1$ is even) or $1$ (if $q-1$ is odd). When looking for such a bound, one should not consider candidates given by values $i' \in \mathcal{J}\backslash \mathcal{I}_1\cup\{i_{2,q-2}\}$; this is because the coefficient of $q^{2n-1}$ in its $q$-adic expansion is positive and increases if one adds $m$, giving rise to a greater value $\max\{i(0),i(1)\}$ than when starting with elements $i'$ in $\mathcal{I}_1\backslash\{i_{2,q-2}\}$.

Then, to get the bound $A'(q,t)$, one must keep in mind those elements $i\in\mathcal{J}$ described in (\ref{bound}) with $i'\in\mathcal{I}_1\backslash\{i_{2,q-2}\}$ and consider the element $i$ whose value $\max\{i(0),i(1)\}$ is a minimum. Taking into account that the coefficients of $q^{n-1}$ and $q^{2n-1}$ in the $q$-adic expansion of every element in $\mathcal{I}_1\backslash\{i_{2,q-2}\}$ are, respectively, $q-1$ and $0$, we get $A'_{q^{n-1}}=\frac{q-1}{2}$ when $q-1$ is even and $A'_{q^{n-1}}=\lceil\frac{q-1}{2}\rceil$, otherwise.


As a consequence, the inclusion in our Item i) holds if one considers the value $A(q,t):=\min S-1$, where
\begin{multline*}
    S=\Bigg\{j(0) \mid j(0)=j(1),\text{ } j(0)=\sum_{k=0}^{q^{n-1}-1}j(0)_kq^k+\floor*{\frac{q-1}{2}} q^{n-1},\\
    j=i+xm-y(q^{n-1}+q^{n-t-1}) \text{ or } j=i+xm-y(q^{n-1}+q^{n+t-1}),\\ i\in\mathcal{I}_1\backslash\{i_{2,q-2}\},\text{ } x, y \text{ positive integers}\Bigg\}.
\end{multline*}
Then, $A(q,t)=j_0(0)-1$, where
$$
j_0=i_0 +  \left\lfloor \frac{q-1}{2} \right\rfloor m - \left\lceil \frac{q-1}{2} \right\rceil \left(q^{n-t-1} + q^{n-1}\right),
$$
see Table \ref{longtable}, and notice that we need $i=i_0$ to obtain $j(0)=j(1)$. This proves the corresponding case in the statement. It is worthwile to add that $A'(q,t)=A(q,t)$ only when $q-1$ is even.


\textit{When $\frac{n}{2} < t <n$}, then $n-t-1 < t-1$ and we can reason similarly, but to get the value $j_0'$ playing the same role as $j_0$, instead of $i_0$, we have to use the value $i_0'$ in $\mathcal{I}_1$ whose $q$-adic expansion has $\lceil \frac{q-1}{2} \rceil$ as a coefficient for $q^{2n-t-1}$. That is, $i_0'$ is $i_{2, \lceil \frac{q-1}{2} \rceil -2}$ when $q \geq 4$ and $i_0'=i_1$ otherwise ($q=3$). Then we have to consider
\[
j_0' = i_0' +  \left\lfloor \frac{q-1}{2} \right\rfloor m - \left\lceil \frac{q-1}{2} \right\rceil \left(q^{n-1} + q^{n+t-1}\right)
\]
to deduce that $A(q,t) = j_0'(0)-1$.



\textit{The cases $t=n$ and $q=2$} follow by computing $$A(q,t)=\min\{\max\{i(0),i(1)\}, i\in\mathcal{J}\}-1=i_1(0)-1.$$

Now, we show Item ii). The dimension of $E_{\Delta(\tau), \mathrm{Tr}}$ is $\tau +1$ because $\mathrm{ev}_{\mathrm{Tr}}$ is injective; in fact it is given by a Vandermonde matrix over $\mathbb{F}_{q^{2n}}$. Finally, the assertion about the minimum distance of $\left( E_{\Delta(\tau), \mathrm{Tr}} \right)^{\perp_h}$ can be proved as in the proof of \cite[Theorem 8]{Traza}.
\end{proof}

A well-known result (see for instance \cite{Aly, Ketkar}) states that a self-orthogonal (for the Hermitian inner product) linear code $\mathcal{C}$ over a field $\mathbb{F}_{Q^2}$ of length $n$ and dimension $k$, $Q$ being a prime power, gives rise to a stabilizer quantum code with parameters $[[n, n-2k, \geq d_h^{\perp}]]_Q$, where $d_h^{\perp}$ stands for the minimum distance of the Hermitian dual of $\mathcal{C}$. Thus, we get the following immediate consequence of Theorem \ref{era2}.

\begin{cor}
\label{el15}
Let $q$ be a prime power. Assume that $(q,n,b)$ is a triple satisfying Property (\ref{A}). With notation as in Theorem \ref{era2}, for each non-negative integer $\tau \leq A(q,t)$, there is a stabilizer quantum code with parameters
\[
[[q^{2n-1-t}+q^{2n-1}, q^{2n-1-t}+q^{2n-1}-2 \tau-2, \geq \tau +2]]_{q^n}.
\]
\end{cor}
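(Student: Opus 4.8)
The plan is to obtain the statement as a direct corollary of Theorem \ref{era2} combined with the standard passage from Hermitian self-orthogonal classical codes to stabilizer quantum codes recalled immediately before the corollary. Concretely, I would fix a triple $(q,n,b)$ satisfying Property (\ref{A}) to which Theorem \ref{era2} applies, a nonnegative integer $\tau\leq A(q,t)$, and write $\mathrm{Tr}(X)=\mathrm{Tr}_b(X)$ and $\Delta(\tau)=\{0,1,\dots,\tau\}$ as in that theorem.

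First I would invoke Theorem \ref{era2} for $\Delta(\tau)$: it yields that the linear code $\mathcal{C}:=E_{\Delta(\tau),\mathrm{Tr}}$ over $\mathbb{F}_{q^{2n}}$ has length $m=q^{2n-1-t}+q^{2n-1}$, that it is Hermitian self-orthogonal (Item i)), that its dimension equals $\tau+1$, and that $\bigl(E_{\Delta(\tau),\mathrm{Tr}}\bigr)^{\perp_h}$ has minimum distance at least $\tau+2$ (Item ii)). Next I would apply the quoted result with $Q:=q^n$, so that $Q^2=q^{2n}$ and $\mathcal{C}$ is a Hermitian self-orthogonal code over $\mathbb{F}_{Q^2}$ of length $m$ and dimension $k:=\tau+1$. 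It therefore produces a stabilizer quantum code with parameters $[[m,\,m-2k,\,\geq d_h^{\perp}]]_{Q}$. Substituting $m=q^{2n-1-t}+q^{2n-1}$, $k=\tau+1$ and $d_h^{\perp}\geq\tau+2$, and simplifying $m-2(\tau+1)=q^{2n-1-t}+q^{2n-1}-2\tau-2$, gives exactly the asserted parameters $[[q^{2n-1-t}+q^{2n-1},\,q^{2n-1-t}+q^{2n-1}-2\tau-2,\,\geq\tau+2]]_{q^n}$.

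There is no genuine obstacle in this deduction: all the substantive work (the description of $\mathrm{Tr}_b(X)$, the vanishing analysis of the power sums $s_i$ in Theorem \ref{El7}, and the self-orthogonality threshold $A(q,t)$) is already encapsulated in Theorem \ref{era2}. The only small points to track are the identification $Q^2=q^{2n}$ in the quoted construction, the dimension count $k=\tau+1$ (which rests on the injectivity of $\mathrm{ev}_{\mathrm{Tr}}$ established in Theorem \ref{era2}), and the observation that the hypothesis $\tau\leq A(q,t)$ is precisely what is needed to apply Theorem \ref{era2}; the exceptional triple $(q,n,b)=(2,2,3)$ is understood to be excluded here, as in that theorem.
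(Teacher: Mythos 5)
Your proposal is correct and matches the paper's own (implicit) argument: the corollary is stated there as an immediate consequence of Theorem \ref{era2} together with the standard passage from a Hermitian self-orthogonal $[m,k]$ code over $\mathbb{F}_{Q^2}$ to an $[[m,m-2k,\geq d_h^{\perp}]]_Q$ stabilizer code, with $Q=q^n$, exactly as you describe. Your remark that the exceptional triple $(2,2,3)$ must be understood as excluded (since Theorem \ref{era2} excludes it) is a fair observation about the corollary's hypotheses.
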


\section{Stabilizer quantum codes from subfield-subcodes. Examples}

In this section we show that, considering subfield-subcodes of the above described codes, one  obtains $q^{n'}$-ary stabilizer codes, where $n' <n$ and $n'$ divides $n$. We also prove that some of these codes have excellent parameters as we will explain at the end of the section.

\subsection{Subfield-subcodes}
\label{subfield}
Denote by $(\alpha_1, \alpha_2, \ldots, \alpha_{q^{2n}-1})$ an ordered sequence of the non-zero elements of the finite field $\mathbb{F}_{q^{2n}}$. Similarly to (\ref{evaluation}), we consider the map
\begin{equation}
\label{evaluation2}
\mathrm{ev}: R \left(:= \mathbb{F}_{q^{2n}}[X] /\langle X^{q^{2n}-1} -1  \rangle\right) \to \mathbb{F}_{q^{2n}}^{q^{2n}-1},
\end{equation}
given by $\mathrm{ev}(h)=(h(\alpha_1), \ldots, h(\alpha_{q^{2n}-1}))$,
$h$ being the polynomial function defined by the class of a polynomial (also named $h$) of $\mathbb{F}_{q^{2n}}[X]$ in $\mathbb{F}_{q^{2n}}[X]/\langle X^{q^{2n}-1} -1  \rangle$.

Fix a positive integer $n' <n$ such that $n'$ divides $n$. Write $\mathfrak{G}:=\{0,1, \ldots, q^{2n}-2\}$ regarded as a set of representatives of the quotient ring $\mathbb{Z}_{q^{2n}-1}$, where we consider cyclotomic cosets with respect to $q^{2n'}$. Denote by $\mathfrak{C}_g$, $g \in \mathfrak{G}$, the cyclotomic coset of the form $\mathfrak{C}_g:=\{q^{2 n' i}g \mid i \geq 0\}$, where for simplicity we only pick those integers $g$ such that $g= \min \mathfrak{C}_g$. These values $g$ are named minimal representatives of their cyclotomic cosets. Denote by $\mathfrak{S}:=\{g_0<g_1< \cdots <g_{\omega}\}$ the ordered set of minimal representatives of cyclotomic cosets.

Let $\Gamma$ be a nonempty subset of $\mathfrak{G}$, using the map $\mathrm{ev}$, we define the evaluation code $\mathcal{C}_\Gamma$ as the linear code, over the field $\mathbb{F}_{q^{2n}}$, generated by the set $\{ \mathrm{ev}(X^i) \;|\; i \in \Gamma\}$. Within this framework, the subfield-subcode of $\mathcal{C}_\Gamma$ over the field $\mathbb{F}_{q^{2 n'}}$ is
\[
\mathcal{C}_\Gamma|_{\mathbb{F}_{q^{2 n'}}} := \mathcal{C}_\Gamma \cap \left(\mathbb{F}_{q^{2 n'}}\right)^{q^{2n}-1}.
\]

In this article, to bound the minimum distance by considering the BCH approach, we are only interested in sets $\Gamma$ which are union of cyclotomic cosets whose minimal representatives start at $g_0$ and are consecutive. That is, fix a positive integer $\tau < \omega$ and set
\[
\Gamma(\tau):= \mathfrak{C}_{g_0} \cup \mathfrak{C}_{g_1} \cup \cdots \cup \mathfrak{C}_{g_\tau}.
\]
The subfield-subcodes over $\mathbb{F}_{q^{2n'}}$ of the codes $\mathcal{C}_{\Gamma(\tau)}$ enjoy good properties with respect to duality since they are Galois invariant (see \cite{Sti} and \cite[Proposition 11]{Traza}) and, as a consequence, one gets the following result:
\begin{pro}
With the above notation, the following BCH-type bound holds:
\[
d \left[ \left(\mathcal{C}_{\Gamma(\tau)}|_{\mathbb{F}_{q^{2 n'}}}\right)^{\perp_h} \right] \geq g_{\tau +1} +1,
\]
where $d \left[ \left(\mathcal{C}_{\Gamma(\tau)}|_{\mathbb{F}_{q^{2 n'}}}\right)^{\perp_h} \right] $ denotes the minimum distance of the Hermitian dual of the subfield-subcode over the field $\mathbb{F}_{q^{2 n'}}$ of the code $\mathcal{C}_{\Gamma(\tau)}$.
\end{pro}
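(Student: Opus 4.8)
Write $N = q^{2n}-1$. I would deduce the bound from the classical BCH bound, by exhibiting in the defining set of the cyclic code $\left(\mathcal{C}_{\Gamma(\tau)}|_{\mathbb{F}_{q^{2n'}}}\right)^{\perp_h}$ a block of $g_{\tau+1}$ (essentially consecutive) elements. The combinatorial input is an immediate consequence of the shape of $\Gamma(\tau)$: since $0 = g_0 < g_1 < \cdots < g_\omega$ is the increasing list of minimal representatives of the $q^{2n'}$-cyclotomic cosets, every integer $j$ with $0 \le j < g_{\tau+1}$ has its minimal representative in $\{g_0, \ldots, g_\tau\}$, hence $j \in \Gamma(\tau)$; thus $\{0, 1, \ldots, g_{\tau+1}-1\} \subseteq \Gamma(\tau)$.

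Next I would translate to cyclic codes. Just as in the proof of Theorem \ref{era2}, $\mathrm{ev}(X^a) \cdot_h \mathrm{ev}(X^b) = \sum_{j=1}^N \alpha_j^{a + q^n b}$, which vanishes unless $a + q^n b \equiv 0 \pmod N$; using $q^{2n} \equiv 1 \pmod N$ this identifies $\mathcal{C}_{\Gamma(\tau)}$, as a cyclic code over $\mathbb{F}_{q^{2n}}$ of length $N$, with the code of defining set $\mathbb{Z}_N \setminus (-\Gamma(\tau))$. Since $\Gamma(\tau)$ is a union of $q^{2n'}$-cyclotomic cosets, so are $-\Gamma(\tau)$ and its complement; this is exactly the Galois-invariance property recalled before the statement (see \cite{Sti} and \cite[Proposition 11]{Traza}), and its effect is that the subfield-subcode $\mathcal{C}_{\Gamma(\tau)}|_{\mathbb{F}_{q^{2n'}}}$ is the cyclic code over $\mathbb{F}_{q^{2n'}}$ with the same defining set $\mathbb{Z}_N \setminus (-\Gamma(\tau))$ — that is, passing to the subfield-subcode does not enlarge the defining set. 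Computing the Hermitian dual of a cyclic code over $\mathbb{F}_{q^{2n'}}$ (where the conjugation is $x \mapsto x^{q^{n'}}$) then gives that $\left(\mathcal{C}_{\Gamma(\tau)}|_{\mathbb{F}_{q^{2n'}}}\right)^{\perp_h}$ is the cyclic code over $\mathbb{F}_{q^{2n'}}$ of defining set $q^{n'}\Gamma(\tau) := \{q^{n'} i \bmod N \mid i \in \Gamma(\tau)\}$.

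To finish, observe that $\gcd(q^{n'}, N) = 1$, so $q^{n'}\Gamma(\tau)$ contains the arithmetic progression $0, q^{n'}, 2q^{n'}, \ldots, (g_{\tau+1}-1)q^{n'}$ of $g_{\tau+1}$ terms; equivalently, the weight-preserving multiplier $\mu_{q^{n'}}$ sends $\left(\mathcal{C}_{\Gamma(\tau)}|_{\mathbb{F}_{q^{2n'}}}\right)^{\perp_h}$ to a cyclic code whose defining set contains the $g_{\tau+1}$ consecutive integers $0, 1, \ldots, g_{\tau+1}-1$, and the classical BCH bound yields minimum distance at least $g_{\tau+1}+1$. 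The one step that needs genuine care is the Galois-invariance reduction: one must ensure that taking the subfield-subcode does not shrink the code (equivalently, does not enlarge the defining set of its Hermitian dual), since otherwise the bound $g_{\tau+1}+1$ could fail; this is precisely guaranteed by the cited results together with the fact that $\Gamma(\tau)$, and hence $\mathbb{Z}_N \setminus (-\Gamma(\tau))$, is a union of $q^{2n'}$-cyclotomic cosets. Everything else is the routine dictionary between evaluation codes and cyclic codes plus the standard BCH argument.
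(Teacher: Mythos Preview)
Your proposal is correct and follows precisely the route the paper has in mind: the paper does not give an explicit proof of this proposition at all, but simply states it as a consequence of Galois invariance (citing \cite{Sti} and \cite[Proposition 11]{Traza}), and your argument is exactly the standard unpacking of that remark --- identify $\mathcal{C}_{\Gamma(\tau)}$ with a cyclic code, use that $\Gamma(\tau)$ is a union of $q^{2n'}$-cyclotomic cosets so the subfield-subcode keeps the same defining set, compute the defining set of the Hermitian dual, and apply the BCH bound to the block $\{0,1,\ldots,g_{\tau+1}-1\}\subseteq\Gamma(\tau)$. There is nothing to correct.
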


Now we return to consider the family of evaluation codes defined in Section \ref{se:dos}, where the map $\mathrm{ev}_{\mathrm{Tr}}$ is used. Keeping the notation as in that section, we define the subfield-subcode over the field $\mathbb{F}_{q^{2 n'}}$ of the code $E_{\Delta, \mathrm{Tr}}$ as
\[
E_{\Delta, \mathrm{Tr}}|_{\mathbb{F}_{q^{2 n'}}} := E_{\Delta, \mathrm{Tr}} \cap \mathbb{F}_{q^{2 n'}}^m.
\]

Recall that $\mathcal{H}=\{0,1, \ldots, m-1\}$ and pick $g_\tau $. Consider the set $\Gamma(\tau)$; these values are initially regarded as powers of monomials generating a linear space of elements in the quotient ring $R$, but, since we desire to use the map $\mathrm{ev}_{\mathrm{Tr}}$, we must consider their classes modulo the ideal  $\langle \mathrm{Tr}(X) \rangle$ and, when considered as elements in  $\mathbb{F}_{q^{2n}}[X]/\langle \mathrm{Tr}(X) \rangle$, they provide generators of the form $X^i$, $i  \in \Gamma(\tau)^{\mathcal{H}} $, of a linear space which can be evaluated by $\mathrm{ev}_{\mathrm{Tr}}$. Notice that $\Gamma(\tau)^{\mathcal{H}}$ is a suitable set of indices included in $\mathcal{H}$. Then, reasoning as in the proof of \cite[Theorem 13]{Traza}, the following result follows:
\begin{pro}
\label{subf}
The dimension and minimum distance of the subfield-subcode $E_{\Gamma(\tau)^{\mathcal{H}}, \mathrm{Tr}}|_{\mathbb{F}_{q^{2 n'}}}$ and its Hermitian dual satisfy:
\begin{enumerate}
  \item $\dim\left( E_{\Gamma(\tau)^{\mathcal{H}}, \mathrm{Tr}}|_{\mathbb{F}_{q^{2 n'}}} \right) \leq \sum_{\ell=0}^{\tau} \# \mathfrak{C}_{g_\ell}$, where $\#$ means cardinality.
  \item $d \left( E_{\Gamma(\tau)^{\mathcal{H}}, \mathrm{Tr}}|_{\mathbb{F}_{q^{2 n'}}} \right)^{\perp_h} \geq g_{\tau+1} +1$.
\end{enumerate}
\end{pro}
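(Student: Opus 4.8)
The plan is to exhibit $E_{\Gamma(\tau)^{\mathcal{H}}, \mathrm{Tr}}$ as a coordinate puncturing of the cyclic evaluation code $\mathcal{C}_{\Gamma(\tau)}$ introduced above, and then to transport both statements from the corresponding (already established) facts about $\mathcal{C}_{\Gamma(\tau)}$ via the standard interplay between puncturing, shortening, subfield-subcodes and Hermitian duality; this follows the line of argument of \cite[Theorem 13]{Traza}.

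\textbf{The puncturing picture.} First I would note that, by Property~(\ref{A}), $\mathrm{Tr}(X)$ splits in $\mathbb{F}_{q^{2n}}[X]$ into $m$ distinct linear factors, and its roots $\beta_1,\dots,\beta_m$ are nonzero since the constant term of $\mathrm{Tr}(X)$ is that of $P_b(X)$, namely $1$. Hence $\mathrm{Tr}(X)$ divides $X^{q^{2n}-1}-1$ and there is a natural surjection $\rho\colon R\to\mathbb{F}_{q^{2n}}[X]/\langle\mathrm{Tr}(X)\rangle$. Letting $\pi\colon\mathbb{F}_{q^{2n}}^{q^{2n}-1}\to\mathbb{F}_{q^{2n}}^m$ be the projection onto the coordinates indexed by $\beta_1,\dots,\beta_m$ (a sub-tuple of $(\alpha_1,\dots,\alpha_{q^{2n}-1})$), one has $\mathrm{ev}_{\mathrm{Tr}}(\rho(h))=\pi(\mathrm{ev}(h))$ for every $h\in R$. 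Since $E_{\Gamma(\tau)^{\mathcal{H}},\mathrm{Tr}}$ and $\mathcal{C}_{\Gamma(\tau)}$ are, respectively, the images under $\mathrm{ev}_{\mathrm{Tr}}\circ\rho$ and $\mathrm{ev}$ of the $\mathbb{F}_{q^{2n}}$-span of $\{X^i : i\in\Gamma(\tau)\}$, we get $E_{\Gamma(\tau)^{\mathcal{H}},\mathrm{Tr}}=\pi\bigl(\mathcal{C}_{\Gamma(\tau)}\bigr)$, i.e. the puncturing of $\mathcal{C}_{\Gamma(\tau)}$ at the coordinates off $\{\beta_1,\dots,\beta_m\}$. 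For item~(1) this already gives $\dim_{\mathbb{F}_{q^{2n}}}E_{\Gamma(\tau)^{\mathcal{H}},\mathrm{Tr}}\le\dim_{\mathbb{F}_{q^{2n}}}\mathcal{C}_{\Gamma(\tau)}$, and the right-hand side equals $\#\Gamma(\tau)=\sum_{\ell=0}^{\tau}\#\mathfrak{C}_{g_\ell}$ because $\mathrm{ev}$ is an isomorphism, so the generators $\mathrm{ev}(X^i)$, $i\in\Gamma(\tau)$, are $\mathbb{F}_{q^{2n}}$-linearly independent. Combining this with the elementary fact that the $\mathbb{F}_{q^{2n'}}$-dimension of a subfield-subcode is bounded by the $\mathbb{F}_{q^{2n}}$-dimension of the ambient code yields item~(1).

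\textbf{Minimum distance.} Intersecting $E_{\Gamma(\tau)^{\mathcal{H}},\mathrm{Tr}}=\pi(\mathcal{C}_{\Gamma(\tau)})$ with $\mathbb{F}_{q^{2n'}}^m$ gives the inclusion $E_{\Gamma(\tau)^{\mathcal{H}},\mathrm{Tr}}|_{\mathbb{F}_{q^{2n'}}}\supseteq\pi\bigl(\mathcal{C}_{\Gamma(\tau)}|_{\mathbb{F}_{q^{2n'}}}\bigr)$, and the code on the right is exactly the puncturing to the coordinate set $T$ of $\beta_1,\dots,\beta_m$ of the $\mathbb{F}_{q^{2n'}}$-linear code $\mathcal{C}_{\Gamma(\tau)}|_{\mathbb{F}_{q^{2n'}}}$. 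Passing to Hermitian duals over $\mathbb{F}_{q^{2n'}}$, and using that the Euclidean puncturing/shortening duality carries over to the Hermitian form because componentwise conjugation commutes with puncturing and shortening, we obtain that $\bigl(E_{\Gamma(\tau)^{\mathcal{H}},\mathrm{Tr}}|_{\mathbb{F}_{q^{2n'}}}\bigr)^{\perp_h}$ is contained in the shortening to $T$ of $\bigl(\mathcal{C}_{\Gamma(\tau)}|_{\mathbb{F}_{q^{2n'}}}\bigr)^{\perp_h}$. A subcode has minimum distance at least that of the ambient code, and shortening amounts to taking the codewords supported on $T$ and deleting their (zero) coordinates outside $T$, so it does not decrease the minimum distance; therefore $d\bigl[(E_{\Gamma(\tau)^{\mathcal{H}},\mathrm{Tr}}|_{\mathbb{F}_{q^{2n'}}})^{\perp_h}\bigr]\ge d\bigl[(\mathcal{C}_{\Gamma(\tau)}|_{\mathbb{F}_{q^{2n'}}})^{\perp_h}\bigr]\ge g_{\tau+1}+1$, the last inequality being the BCH-type bound proved just above (it rests on $\{0,1,\dots,g_{\tau+1}-1\}\subseteq\Gamma(\tau)$ and the Galois invariance of $\mathcal{C}_{\Gamma(\tau)}$, see \cite[Proposition 11]{Traza}).

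\textbf{Expected obstacle.} There is no genuine difficulty here; the delicate part is purely organizational: one must keep the chain ``puncture, then dualize, then shorten'' oriented so that the minimum distance gets bounded \emph{below} rather than above, and one must notice that only the inclusion $E_{\Gamma(\tau)^{\mathcal{H}},\mathrm{Tr}}|_{\mathbb{F}_{q^{2n'}}}\supseteq\pi(\mathcal{C}_{\Gamma(\tau)}|_{\mathbb{F}_{q^{2n'}}})$ is at hand (equality need not hold), which is nonetheless enough. The other point to be explicit about is the divisibility $\mathrm{Tr}(X)\mid X^{q^{2n}-1}-1$, on which the whole identification of $E_{\Gamma(\tau)^{\mathcal{H}},\mathrm{Tr}}$ with a puncturing of $\mathcal{C}_{\Gamma(\tau)}$ depends.
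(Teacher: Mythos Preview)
Your argument is correct and follows essentially the same route the paper indicates: the paper does not spell out a proof but simply writes ``reasoning as in the proof of \cite[Theorem 13]{Traza}, the following result follows,'' and your proposal is precisely a careful unpacking of that reasoning---identify $E_{\Gamma(\tau)^{\mathcal{H}},\mathrm{Tr}}$ with the puncturing of $\mathcal{C}_{\Gamma(\tau)}$, bound the dimension directly, and pass the BCH-type bound through the puncture/shorten Hermitian duality. The organizational care you flag (orienting the inclusions so the distance bound goes the right way, and noting that only the inclusion $E_{\Gamma(\tau)^{\mathcal{H}},\mathrm{Tr}}|_{\mathbb{F}_{q^{2n'}}}\supseteq\pi(\mathcal{C}_{\Gamma(\tau)}|_{\mathbb{F}_{q^{2n'}}})$ is available) is exactly what is needed.
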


We conclude this subsection by providing parameters of $q^{n'}$-ary stabilizer quantum codes, derived from the codes in Proposition \ref{subf}.

\begin{teo}
\label{bueno}
Let $q$ a prime power and $(q,n,b) \neq (2,2,3)$, $b= 1 + q^t$ and $0<t \leq n$, a triple satisfying Property (\ref{A}). Fix a positive integer $n' < n$ such that $n'$ divides $n$. Set
$\mathfrak{S}:=\{g_0<g_1< \cdots <g_{\omega}\} \subset \mathfrak{G}:=\{0,1, \ldots, q^{2n}-2\}$ the ordered set of minimal representatives of cyclotomic cosets, corresponding to the quotient ring $\mathbb{Z}_{q^{2n}-1}$, with respect to $q^{2n'}$. Consider the value $A(q,t)$ introduced in Theorem \ref{era2} and the following values:
\[
B(q,t) := q^n - (q-1) q^{n-t} -q, \;\;\; B^1(q,t) := q^n - (q-1) q^{n-t} -2
\]
and  $C(q,t) := (q^{2n-2}-1)/(q^{n-2}+1)$.

Define $D(q,t)$ as follows:
\begin{itemize}
\item When $t >1$,
\begin{itemize}
\item $D(q,t):=A(q,t)$, whenever $n' \neq 1$.
\item Otherwise ($n' =1$):
\[
D(q,t):= \begin{cases} B(q,t) & \text{ if } n \text{ is even}, \\
\min\{A(q,t),B(q,t)\} & \text{otherwise.}
\end{cases}
\]
\end{itemize}
\item When $t =1$ and $n \neq 2$,
\begin{itemize}
\item $D(q,t):=A(q,t)$, whenever $n' >2$,
\item $D(q,t):=C(q,t)$, whenever $n' =2$,
\item $D(q,t):=B^1(q,t)$, otherwise ($n' =1$).
\end{itemize}
\item When $t =1$ and $n = 2$, $D(q,t) := q-2$.
\end{itemize}

Then, for each element $g_\tau \in \mathfrak{S}$ such that  $g_\tau \leq D(q,t)$, the subfield-subcode $E_{\Gamma(\tau)^{\mathcal{H}}, \mathrm{Tr}}|_{\mathbb{F}_{q^{2 n'}}}$ is Hermitian self-orthogonal and, as a consequence, there exists a stabilizer quantum code with parameters
\[
\left[ \left[ q^{2n-1-t} + q^{2n-1}, \geq q^{2n-1-t} + q^{2n-1} - 2 \sum_{\ell=0}^{\tau} \#\mathfrak{C}_{g_\ell}, \geq g_{\tau +1} +1 \right] \right]_{q^{n'}}.
\]
\end{teo}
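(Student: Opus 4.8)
The plan is to reduce Theorem \ref{bueno} to Theorem \ref{era2} together with Proposition \ref{subf}, so the core issue is to verify Hermitian self-orthogonality of the subfield-subcode $E_{\Gamma(\tau)^{\mathcal{H}}, \mathrm{Tr}}|_{\mathbb{F}_{q^{2n'}}}$ under the stated bound $g_\tau \leq D(q,t)$. First I would recall that, by Proposition \ref{subf}, this subfield-subcode has minimum distance of its Hermitian dual at least $g_{\tau+1}+1$ and dimension at most $\sum_{\ell=0}^{\tau}\#\mathfrak{C}_{g_\ell}$; so once self-orthogonality is established, the quoted result on stabilizer codes from Hermitian self-orthogonal codes (the same one cited before Corollary \ref{el15}) yields the claimed $[[\,\cdot\,]]_{q^{n'}}$ parameters immediately. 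The nontrivial part is therefore the inclusion $E_{\Gamma(\tau)^{\mathcal{H}}, \mathrm{Tr}}|_{\mathbb{F}_{q^{2n'}}} \subseteq \bigl(E_{\Gamma(\tau)^{\mathcal{H}}, \mathrm{Tr}}|_{\mathbb{F}_{q^{2n'}}}\bigr)^{\perp_h}$.

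The key observation is that the subfield-subcode sits inside the larger code $E_{\Delta(\tau'), \mathrm{Tr}}$ over $\mathbb{F}_{q^{2n}}$, where $\tau'$ is chosen so that $\Gamma(\tau)^{\mathcal{H}} \subseteq \Delta(\tau') = \{0,1,\ldots,\tau'\}$: since $\Gamma(\tau)$ is the union of the cyclotomic cosets $\mathfrak{C}_{g_0},\ldots,\mathfrak{C}_{g_\tau}$, its reduction modulo $\langle X^{q^{2n}-1}-1\rangle$ and then modulo $\langle\mathrm{Tr}(X)\rangle$ produces exponents in $\mathcal{H}$, all bounded by $m-1$. The point of the various cases in the definition of $D(q,t)$ is precisely to guarantee that whenever $g_\tau \leq D(q,t)$, every exponent arising from $\Gamma(\tau)^{\mathcal{H}}$ — equivalently, every element of every cyclotomic coset $\mathfrak{C}_{g_\ell}$ with $\ell\leq\tau$, after reduction mod $q^{2n}-1$ and mod $\mathrm{Tr}(X)$ — is at most $A(q,t)$. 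Once that is checked, $E_{\Gamma(\tau)^{\mathcal{H}}, \mathrm{Tr}} \subseteq E_{\Delta(A(q,t)), \mathrm{Tr}}$, which is Hermitian self-orthogonal by Theorem \ref{era2} i); and Hermitian self-orthogonality is inherited by any subcode and, in particular, by the subfield-subcode. I would organize this bookkeeping case by case exactly along the split in the statement: $t>1$ with $n'\neq 1$ (here $D=A$ trivially works because cyclotomic cosets with respect to $q^{2n'}$ for $n'\geq 2$ keep representatives small), $t>1$ with $n'=1$ (where cosets are larger, forcing the bounds $B(q,t)$ or $\min\{A,B\}$ depending on the parity of $n$), and the $t=1$ subcases (where the support of $\mathrm{Tr}(X)$ is different, as reflected in $B^1(q,t)$ and $C(q,t)$, the latter handling $n'=2$ via the coset sizes $q^{n-2}+1$).

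The main obstacle I expect is the combinatorial verification that, for each case, the hypothesis $g_\tau \leq D(q,t)$ forces all $q^{2n'}$-cyclotomic coset elements up to $\mathfrak{C}_{g_\tau}$ to reduce into $\Delta(A(q,t))$ inside $\mathcal{H}$. This requires tracking two reductions at once: the cyclic reduction modulo $q^{2n}-1$ (which controls how large a coset element can be before reduction), and then the reduction modulo $\mathrm{Tr}(X)$ using the relations among the $a_k$ described in Proposition \ref{La5} and Table \ref{TTableoplus} (which can lower an exponent $\geq m$ back into $\mathcal{H}$). The bound $C(q,t) = (q^{2n-2}-1)/(q^{n-2}+1)$ in the $n'=2$, $t=1$ case is the clearest signal of this phenomenon: it is exactly the threshold below which a coset of size dividing $q^{n-2}+1$ has all of its members representable by exponents that stay small enough. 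I would handle each threshold by writing the generic coset element in $q^{n'}$-adic (or $q$-adic) form and bounding it, then invoking the proof technique of \cite[Theorem 13]{Traza} for the reduction modulo $\mathrm{Tr}(X)$, which is already the method used to obtain Proposition \ref{subf}. Everything else — the dimension count and the distance bound — is a direct citation of Proposition \ref{subf}.
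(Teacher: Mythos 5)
Your reduction has a genuine gap, in fact two. First, the self-orthogonality needed for the subfield-subcode over $\mathbb{F}_{q^{2n'}}$ is with respect to the Hermitian form $\sum_i x_i y_i^{q^{n'}}$ (this is what feeds the $q^{n'}$-ary stabilizer construction), whereas Theorem \ref{era2} gives self-orthogonality of $E_{\Delta(\tau),\mathrm{Tr}}$ with respect to $\sum_i x_i y_i^{q^{n}}$. These are different forms: for vectors with entries in $\mathbb{F}_{q^{2n'}}$ one has $y_i^{q^n}=y_i^{q^{\,n \bmod 2n'}}$, which equals $y_i^{q^{n'}}$ only when $n/n'$ is odd; when $n/n'$ is even it is the Euclidean form. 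So containment inside a code that is self-orthogonal for the $q^n$-Hermitian form does not by itself yield the assertion. Second, the containment $\Gamma(\tau)^{\mathcal{H}} \subseteq \Delta(A(q,t))$ on which your plan hinges is false in general: a cyclotomic coset $\mathfrak{C}_{g}$ contains elements $gq^{2n'i}$ of size up to roughly $q^{2n-1}$, and reduction modulo $\mathrm{Tr}(X)$ only brings exponents below $m=q^{2n-1}+q^{2n-1-t}$, not below $A(q,t)$, which is of order $q^n$. For instance with $n'=1$ and $g=1$ the coset already contains $q^{2n-2}\gg A(q,t)$. Relatedly, your reading of $C(q,t)$ is off: it is defined so that $C(q,t)(1+q^{n-2})=q^{2n-2}-1=i_0$, i.e.\ it is the threshold guaranteeing $a+bq^{n-2}<i_0$; it has nothing to do with coset sizes.

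The paper's argument does not try to push the coset exponents down below $A(q,t)$. Instead it exploits Galois invariance of $\mathcal{C}_{\Gamma(\tau)}$: the inner products to be checked are of the form $\mathrm{ev}_{\mathrm{Tr}}(X^{aq^{2n'\ell}+bq^{n'}q^{2n'm}})\cdot\mathrm{ev}_{\mathrm{Tr}}(X^0)$ with $a,b$ minimal representatives, and repeated application of the Frobenius (which permutes the roots of $\mathrm{Tr}(X)$ and hence preserves vanishing of the power sums $s_k$) reduces all of them to $s_{a+bq^{n'r}}=0$ for $0\le n'r\le n$ and $a,b\le D(q,t)$. The case $n'r=n$ is exactly the computation inside the proof of Theorem \ref{era2} and needs $a,b\le A(q,t)$; the cases $n'r<n-1$ (for $t>1$) are free because then $a+bq^{n'r}<i_0$ and Theorem \ref{El7} gives $s=0$; and the boundary exponents $n'r=n-1$ (possible only when $n'=1$) and, for $t=1$, $n'r=n-2$ (possible only when $n'\in\{1,2\}$) are precisely what force the additional bounds $B(q,t)$, $B^1(q,t)$ and $C(q,t)$. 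This case analysis over $r$ is the content you would need to supply; the dimension and distance statements do follow from Proposition \ref{subf} as you say.
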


\begin{proof}
Our proof follows a close reasoning to that used when proving \cite[Theorem 15]{Traza}, although we consider subfield-subcodes over $\mathbb{F}_{q^{2 n'}}$ instead of over $\mathbb{F}_{q^{2}}$. Consider the basis $\mathcal{B}$ of $\mathcal{C}_{\Gamma(\tau)}$ introduced in the proof of \cite[Proposition 11]{Traza} and, reasoning as at the beginning of the proof of \cite[Theorem 15]{Traza}, it suffices to prove that
\[
\mathrm{ev}_{\mathrm{Tr}} \left( X^{a q^{2 n' \ell} + b q^{n'} q^{2n'm}} \right) \cdot \mathrm{ev}_{\mathrm{Tr}} \left( X^{0}\right) = 0,
\]
for values $\ell, m \in \{0,1, \ldots, \frac{n}{n'} -1\}$, $m \geq \ell$ and $a, b \in \{g_0, g_1, \ldots, g_\tau\}$.

When $n' (2m -2 \ell +1) \leq n$, it holds that
\[
\mathrm{ev}_{\mathrm{Tr}} \left( X^{a q^{2 n' \ell} + b q^{n'} q^{2n'm}} \right) \cdot \mathrm{ev}_{\mathrm{Tr}} \left( X^{0}\right) = \left[ \mathrm{ev}_{\mathrm{Tr}} \left( X^{a + b q^{n' (2m-2\ell +1)}} \right) \cdot \mathrm{ev}_{\mathrm{Tr}} \left( X^{0}\right)
\right]^{q^{2n'\ell}}.
\]
Otherwise, $n <n' (2m -2 \ell +1) \leq n' ( \frac{2n}{n'} -1) = 2n- n' < 2n$. Therefore, one can set $2m -2 \ell +1 = \frac{n}{n'} + s$, where $1 \leq s < \frac{n}{n'}$ and then
\begin{multline}
\mathrm{ev}_{\mathrm{Tr}} \left( X^{a q^{2 n' \ell} + b q^{n'} q^{2n'm}} \right) \cdot \mathrm{ev}_{\mathrm{Tr}} \left( X^{0}\right) \\
= \left[ \mathrm{ev}_{\mathrm{Tr}} \left( X^{a + b q^{n +s n'}} \right) \cdot \mathrm{ev}_{\mathrm{Tr}} \left( X^{0}\right)
\right]^{q^{2 n' \ell}}\\
= \left( \left[ \mathrm{ev}_{\mathrm{Tr}} \left( X^{a q^{n - s n'}+b} \right) \cdot \mathrm{ev}_{\mathrm{Tr}} \left( X^{0}\right)
\right]^{q^{n +s n'}}
\right)^{q^{2 n' \ell}}.\\
\end{multline}
Thus, one concludes that it suffices to prove that both products
\[
\mathrm{ev}_{\mathrm{Tr}} \left( X^{a + b q^{n'r }} \right) \cdot \mathrm{ev}_{\mathrm{Tr}} \left( X^{0}\right)
\]
and
\[
\mathrm{ev}_{\mathrm{Tr}} \left( X^{a q^{n' r} + b} \right) \cdot \mathrm{ev}_{\mathrm{Tr}} \left( X^{0}\right)
\]
vanish for all values $a, b \leq D(q,t)$ and $n' r \leq n$. Then, since we give a common bound for $a$ and $b$, it suffices to check that
\begin{equation}
\label{TT}
\mathrm{ev}_{\mathrm{Tr}} \left( X^{a + b q^{n'r }} \right) \cdot \mathrm{ev}_{\mathrm{Tr}} \left( X^{0}\right) = 0
\end{equation}
for $a, b \leq D(q,t)$ and $0 \leq r \leq \frac{n}{n'}$.

{\it Assume first that $t >1$}. Then $a + b q^{n' r} < i_0 = q^{2n-1} - (q-1) q^{2n-1-t} -1$ when $n' r < n-1$ and Equality (\ref{TT}) holds by Theorem \ref{El7}. Thus, one only has to check Equality (\ref{TT}) when $n'r =n$ or $n'r = n-1$. {\it Suppose first that $n' \neq 1$}, then the case $n'r = n-1$ does not happen because $n'$ divides $n$. Then, $n'r=n$ and therefore, if $a, b \leq A(q,t)$, Equality (\ref{TT}) is true because of the proof of Theorem \ref{era2} and our result is proved in this case. {\it Now, if $n'=1$}, it suffices that $a, b \leq A(q,t)$ to prove (\ref{TT})  when $n' r = r= n$. Otherwise, $n'r= r= n-1$, and $a, b \leq B(q,t)$ implies $a + b q^{n-1} < i_0$. Noticing that $(B(q,t) +1) + (B(q,t) +1) q^{n-1} \geq i_0$, one deduces that (\ref{TT})  is true whenever $a, b \leq \min \{A(q,t), B(q,t)\}$. Note that if $n$ is even, $r=n$ means $2m - 2\ell +1 =n$ by the reasoning at the beginning of the proof and this case cannot hold. Hence, when $t >1$, $n'=1$ and $n$ is even, the bound $D(q,t)$ equals $B(q,t)$.

To conclude the proof, {\it assume that $t=1$.} First {\it suppose $n\neq 2$}. Then $i_0 = q^{2n-2} -1$ and $a + b q^{n'r} < i_0$ when $n'r < n-2$. As above, to prove Equality (\ref{TT}) when $n'r=n$ it suffices to have that $a, b \leq A(q,t)$. But one needs $a, b \leq B^1(q,t)$ in case $n'r=n-1$ and $a, b \leq C(q,t)$ whenever $n'r=n-2$ (notice that $C(q,t)(1+q^{n-2})=i_0$ but $\lfloor C(q,t)\rfloor\neq C(q,t)$). We also notice that
\[
B^1(q,t) < C(q,t).
\]
Finally, since $n=n' \alpha $ for some positive integer $\alpha$, $n' r = n-2 = n' \alpha -2$, then $2= n' ( \alpha - r)$, which happens only when either $n'=1$ or $n'=2$. Therefore, $D(q,t)$ equals $A(q,t)$ when $n' >2$, it is $C(q,t)$ when $n'=2$ and $B^1(q,t)$ in case $n'=1$. {\it When $n=2$}, then $n'=1$ and $B^1(q,t)=q-2$. This concludes the proof.
\end{proof}

\begin{rem}\label{remdim}
When the set of exponents of the polynomial $\mathrm{Tr}(X)$ is contained in $\Gamma(\tau)$, then $\dim\left(E_{\Gamma(\tau)^{\mathcal{H}}, \mathrm{Tr}}|_{\mathbb{F}_{q^{2 n'}}}\right) \leq \sum_{\ell=0}^{\tau} \# \mathfrak{C}_{g_\ell}-1$ (because there is a relation modulo $\mathrm{Tr}(X)$ ($\mathrm{Tr}(X)=0$), which decreases the dimension by one). Therefore, one gets a favourable situation since the bound on the dimension of the stabilizer quantum code given in Theorem \ref{bueno} is increased by two.
\end{rem}

\subsection{Examples}
\label{Ejemplos}
In this subsection, we present some examples of quantum stabilizer codes obtained from our previous results. We only show those codes having good parameters, in particular the parameters of all codes in this subsection exceed the quantum Gilbert-Varshamov bounds \cite{FengMa, Ketkar, 71kkk} and, some of them, either are binary records or improve the parameters of others available in the literature.

\subsubsection{Binary examples}
\label{binary}

With the notation as in Theorem \ref{bueno}, consider the triple $(q,n,b)=(2,4,5)$. We have that $t=2$ and  $g_0=0$, $g_1=1$, $g_2=2$, $g_3=3$, $g_4=5$, $g_5=6$, $g_6=7$, $g_7=9$, $g_{8}=10$, $g_{9}=11$ are minimal representatives of cyclotomic cosets. Moreover the cardinality of the cyclotomic cosets $\mathfrak{C}_{g_\ell}$, $1 \leq \ell \leq 8$
is always $4$. Set $n'=1$ and then $D(q,t)=B(q,t) =10$. Applying Theorem \ref{bueno} with $\tau=8$ and noticing that the condition in Remark \ref{remdim} holds, we obtain a $[[160, 96, \geq 12]]_2$ binary stabilizer quantum code, which improves the $[[160,96, \geq 11]]_2$ code given in \cite{codetables}. Thus, we have obtained a record as a binary quantum code. In this paper, by record, we mean a binary quantum code whose parameters either improve some given in \cite{codetables} or correspond to an entry in \cite{codetables} whose construction was missing. Now, using the propagation rules in \cite[Lemmas 69 and 71]{Ketkar} that state that if there exists an $[[n,k,d]]_q$ quantum code, then there are  quantum codes with parameters $[[n,k-1,\geq d]]_q$ and $[[n+1,k, d]]_q$, we find four new records. These are  $[[160,95, \geq 12]]_2$,  $[[161,96, \geq 12]]_2$, $[[162, 96, \geq 12]]_2$ and $[[163, 96, \geq 12]]_2$.

In the remaining of this paper, we will also use the following result to construct stabilizer codes. This result was stated in \cite{galher} and it is an easy consequence of \cite[Lemma 76]{Ketkar} (see also \cite{AK}).

\begin{teo}
\label{Eldearxiv}
Let $C$ be an $\mathbb{F}_{q^{2r}}$-linear code of length $n$ and dimension $k$, where $r$ is a positive integer. Suppose $C \subseteq C^{\perp_h}$, where
\[
C^{\perp_h} := \left\{  \boldsymbol{x} \in \left(\mathbb{F}_{q^{2r}}\right)^n \; \mid \; \boldsymbol{x} \cdot_h \boldsymbol{y} := \sum_{i=1}^n x_i y_i^{q^{r}} = 0 \mbox{ for all $\boldsymbol{y}$ in $C$}
\right\}.
\]
Then, there exists an $\mathbb{F}_{q}$-stabilizer quantum code with parameters
\[
\big[\big[ rn, rn - 2r k, \geq d^\perp_h  \big]\big]_q,
\]
where $d^\perp_h$ is the minimum distance of the code $C^{\perp_h}$.
\end{teo}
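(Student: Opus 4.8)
The plan is to deduce the statement in two steps, exactly along the lines indicated by the attribution to \cite{galher} and \cite[Lemma 76]{Ketkar} (see also \cite{AK}): first produce a $q^{r}$-ary stabilizer code by the standard Hermitian construction, and then reduce the base field from $\mathbb{F}_{q^{r}}$ to $\mathbb{F}_{q}$ by an $\mathbb{F}_{q}$-linear expansion of the coordinates.

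For the first step, I would set $Q:=q^{r}$, which is again a prime power, and note that $\mathbb{F}_{q^{2r}}=\mathbb{F}_{Q^{2}}$ and that the form $\boldsymbol{x}\cdot_{h}\boldsymbol{y}=\sum_{i=1}^{n}x_{i}y_{i}^{q^{r}}$ is precisely the Hermitian inner product attached to the quadratic extension $\mathbb{F}_{Q^{2}}/\mathbb{F}_{Q}$. Thus the hypothesis $C\subseteq C^{\perp_{h}}$ says exactly that $C$ is an $\mathbb{F}_{Q^{2}}$-linear, Hermitian self-orthogonal code of length $n$ and dimension $k$; here $\dim C^{\perp_{h}}=n-k$, so in particular $n-2k\geq 0$. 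By the well-known construction recalled just before Corollary~\ref{el15} (see \cite{Aly,Ketkar}), $C$ then gives a stabilizer quantum code with parameters $[[\,n,\ n-2k,\ \geq d^{\perp}_{h}\,]]_{Q}=[[\,n,\ n-2k,\ \geq d^{\perp}_{h}\,]]_{q^{r}}$, where $d^{\perp}_{h}$ is the minimum distance of $C^{\perp_{h}}$.

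For the second step, I would apply \cite[Lemma 76]{Ketkar}: fixing an $\mathbb{F}_{q}$-basis of $\mathbb{F}_{q^{r}}$ and expanding every coordinate reinterprets a $q^{r}$-ary stabilizer code of length $n$ (whose ambient space $\mathbb{C}^{(q^{r})^{n}}$ equals $\mathbb{C}^{q^{rn}}$) as an $\mathbb{F}_{q}$-stabilizer code of length $rn$, carrying the abelian subgroup of the error group to one of the same kind; since a single nonzero $\mathbb{F}_{q^{r}}$-coordinate expands to between $1$ and $r$ nonzero $\mathbb{F}_{q}$-coordinates, weights can only grow under the expansion, so the minimum distance is not decreased. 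Applying this to the code of the first step, with length $n$, dimension $n-2k$ and distance $\geq d^{\perp}_{h}$, yields an $\mathbb{F}_{q}$-stabilizer code with parameters $[[\,rn,\ r(n-2k),\ \geq d^{\perp}_{h}\,]]_{q}=[[\,rn,\ rn-2rk,\ \geq d^{\perp}_{h}\,]]_{q}$, which is the asserted code. I do not expect a genuine obstacle in this argument; the only things needing care are bookkeeping — verifying that $\cdot_{h}$ really is the Hermitian form of $\mathbb{F}_{(q^{r})^{2}}/\mathbb{F}_{q^{r}}$, so that the first step applies verbatim, and keeping the direction of the inequality on $d^{\perp}_{h}$ correct through the field reduction.
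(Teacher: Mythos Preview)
Your proposal is correct and matches precisely the approach the paper indicates: the paper does not give its own proof but states that the result ``is an easy consequence of \cite[Lemma 76]{Ketkar}'' combined with the Hermitian construction recalled before Corollary~\ref{el15}. Your two-step argument (Hermitian construction over $\mathbb{F}_{q^{r}}$ followed by the $\mathbb{F}_{q}$-expansion of \cite[Lemma 76]{Ketkar}) is exactly this.
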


With the same previous triple $(q,n,b)=(2,4,5)$, using Theorem \ref{era2} and sets $\Delta(i)$, $0 \leq i \leq 12$ ($A(q,t)=12$), one obtains Hermitian self-orthogonal codes $E_{\Delta(i), \mathrm{Tr}}$. By applying Theorem \ref{Eldearxiv} to these codes, one gets binary quantum error-correcting codes of length $n=640$ whose parameters are displayed in Table \ref{latablabin1}.

\begin{table}[b]
\begin{center}
\begin{tabular}{||c|c|c|c|c|c|c|c|c|c|c|c|c||}
  \hline
  $k$ &  624 & 616 & 608 & 600 & 592 &  584 & 576 & 568 & 560& 552& 544&  536   \\
   \hline
 $d\geq$  & 3 & 4 &  5 & 6 & 7 & 8 & 9 & 10 & 11 & 12& 13& 14 \\
  \hline
\end{tabular}
\end{center}
\caption{Parameters of binary quantum codes of length 640}
\label{latablabin1}
\end{table}

We can also combine our procedures and starting from our initial linear codes over $\mathbb{F}_{2^8}$, we consider subfield-subcodes over $\mathbb{F}_{2^4}$. These codes use sets $\Delta$ which are successive union of consecutive cyclotomic cosets $\mathfrak{C}^{'}_i$, $0 \leq i \leq 12$, with respect to $2^4$. That is $\mathfrak{C}^{'}_0 = \{0\}$, $\mathfrak{C}^{'}_1 = \{1,16\}$, $\mathfrak{C}^{'}_2 = \{2, 32 \}$, $\mathfrak{C}^{'}_3 = \{3, 48 \}$, $\mathfrak{C}^{'}_4 = \{4, 64 \}$, $\mathfrak{C}^{'}_5 = \{5, 80 \}$, $\mathfrak{C}^{'}_6 = \{6, 96 \}$, $\mathfrak{C}^{'}_7 = \{7, 112 \}$, $\mathfrak{C}^{'}_8 = \{8, 128 \}$, $\mathfrak{C}^{'}_9 = \{9, 144 \}$, $\mathfrak{C}^{'}_{10} = \{10, 160 \}$, $\mathfrak{C}^{'}_{11} = \{11, 176 \}$, and $\mathfrak{C}^{'}_{12} = \{12, 192 \}$. In this way, using Theorem \ref{bueno} with $n'=2$, one obtains  Hermitian self-orthogonal codes over $\mathbb{F}_{16}$. Note that $D(q,t)=12$. Then, Theorem \ref{Eldearxiv}, applied to these codes, gives rise to binary quantum error-correcting codes of length $n=320$. Some of their parameters are displayed in Table \ref{latablabin2}.

\begin{table}[b]
\begin{center}
\begin{tabular}{||c|c|c|c|c|c|c|c|c|c|c|c|c||}
  \hline
  $k$ & 308 & 300 & 292 & 284 & 276 & 268 & 260 & 252 & 244 & 236 & 228 & 220  \\
   \hline
 $d\geq$ &  3 & 4 &  5 & 6 & 7 & 8 & 9 & 10 & 11 & 12& 13& 14 \\
  \hline
\end{tabular}
\end{center}
\caption{Parameters of binary quantum codes of length 320}
\label{latablabin2}
\end{table}

Applying the last two procedures (both with $A(q,t)=10$) to the triple $(q,n,b) = (2,4,9)$, we get binary quantum error-correcting codes with parameters
$$[[576, 576 -8(i+1), \geq i+2]]_2$$
with $1 \leq i \leq 10$, and of length $n=288$ with parameters as in Table \ref{latablabin4}.



\begin{table}[b]
\begin{center}
\begin{tabular}{||c|c|c|c|c|c|c|c|c|c|c||}
  \hline
  $k$ & 276& 268 & 260 & 252 & 244 & 236 & 228 & 220 & 212 & 204  \\
   \hline
 $d\geq$ &  3 & 4 &  5 & 6 & 7 & 8 & 9 & 10 & 11 & 12 \\
  \hline
\end{tabular}
\end{center}
\caption{Parameters of binary quantum codes of length 288}
\label{latablabin4}
\end{table}

The lengths of the last four families of codes exceed those considered in \cite{codetables}. We have not found binary quantum codes with these lengths in the literature, thus we may conclude that they are new.

\subsubsection{Non-binary examples}
\label{nonbinary}

We devote this subsection to provide parameters of non-binary quantum error-correcting codes obtained with the same three procedures described in Subsection \ref{binary} for the binary case. Specifically, our codes come from applying either Theorem \ref{bueno}, or Theorem \ref{era2} and then Theorem \ref{Eldearxiv}, or Theorem \ref{Eldearxiv} applied to subfield-subcodes of codes given by Theorem \ref{bueno}. Most of them are new and we have not found other codes for comparison, but some of them can be compared and improve some codes in the recent literature.

With the triple $(q,n,b)=(3,2,4)$, applying Theorem \ref{era2} and then Theorem \ref{Eldearxiv}, after noticing that $A(q,t)=3$, we get ternary stabilizer quantum codes with parameters $[[72,64, \geq 3]]_3$, $[[72,60, \geq 4]]_3$ and $[[72,56, \geq 5]]_3$.

Consider now the triple $(q,n,b)=(5,2,6)$ and apply Theorem \ref{bueno} with $n'=1$. The value $D(q,t)$ equals 3 and we obtain a $5$-ary stabilizer quantum code with parameters $[[150, 136, \geq 5]]_5$ improving the $[[150,134, \geq 5]]_5$ code given in \cite{CaoCui}. With the help of Theorems \ref{era2} and \ref{Eldearxiv}, taking into account that $A(q,t)=11$, we also obtain new $5$-ary codes with length $n=300$ and remaining parameters as given in Table \ref{latablanb1}.

\begin{table}
\begin{center}
\begin{tabular}{||c|c|c|c|c|c|c|c|c|c|c|c||}
  \hline
  $k$ & 292 & 288 & 284 & 280 & 276 &  272 & 268 & 264 & 260 & 256 & 252 \\
   \hline
 $d\geq$ &  3 & 4 &  5 & 6 & 7 & 8 & 9 & 10 & 11 & 12& 13 \\
  \hline
\end{tabular}
\end{center}
\caption{Parameters of $5$-ary quantum codes of length 300}
\label{latablanb1}
\end{table}

Using now the triple $(5,2,26)$ and applying Theorem \ref{bueno} with $n'=1$, since $B(q,t)=16$, we get a family of stabilizer quantum codes with parameters $\left\{[[130,130-2(2i+1), \geq 2+i]]_5 \right\}_{i=1}^9$.

Moreover, considering the triple $(q,n,b)=(7,2,8)$ and applying Theorem \ref{bueno} with $n'=1$, $D(q,t)=5$ and we get $7$-ary quantum codes with parameters $[[392, 378, \geq 5]]_7$, $[[392, 374, \geq 6]]_7$ and $[[392, 370, \geq 7]]_7$, improving the codes with parameters $[[392, 376, \geq 5]]_7$, $[[392, 372, \geq 6]]_7$ and $[[392, 364, \geq 7]]_7$ given in \cite{CaoCui} and the code with parameters $[[392,368,\geq 7]]_7$ given in \cite{galher}. With the same triple, applying Theorems \ref{era2} and \ref{Eldearxiv}, since $A(q,t)=23$, we are able to obtain a family of $7$-ary quantum codes with parameters $[[784, 784 - 4 (i+1), \geq i+2]]_7$, $1 \leq i \leq 23$.

Finally, if we take $(q,n,b)=(7,2,50)$ and apply Theorem \ref{bueno} with $n'=1$, we get $B(q,t)=36$ and there is a family of stabilizer quantum codes with the following parameters: $\left\{ [[350, 350-2(2i+1), \geq i+2]]_7 \right\}_{i=1}^{15}$.



\subsubsection{Sporadic codes from trace-depending polynomials}
\label{sporadic}
In this subsection, we show that excellent quantum codes can be obtained by evaluating at the zeros of trace-depending polynomials. We consider here trace-depending polynomials which are different from those studied in this article, and some of our assertions are supported in calculations made with the computational algebra system Magma \cite{Magma}. It is an open question to develop a complete theory for studying this class of quantum error-correcting codes.

All the codes in this subsection are constructed as follows. Set $q=2$, $n=4$ and consider some new trace-depending polynomials $\mathcal{T}(X)$ different from the above considered $b$th trace-depending polynomials $\mathrm{Tr}_b(X)$. We have used \cite{Magma} to check that our polynomials $\mathcal{T}(X)$ have no multiple roots over the field $\mathbb{F}_{2^8}$. The number of roots of each $\mathcal{T}(X)$, say $m$, is not required to be the degree of $\mathcal{T}(X)$. Following the same notation and construction described in Subsection \ref{subfield}, we consider suitable sets $\Delta \subset \mathcal{H}$, codes $E_{\Delta, \mathcal{T}}$ obtained by evaluation under the map (\ref{evaluation}) -where $\mathrm{Tr}(X)$ is substituted by $\mathcal{T}(X)$- and subfield-subcodes $E_{\Delta, \mathcal{T}}|\mathbb{F}_{2^4}$ over the field $\mathbb{F}_{2^4}$. Proposition \ref{subf} determines bounds on the dimension and distance of these codes. Using \cite{Magma} again, we check that the codes $E_{\Delta, \mathcal{T}}|\mathbb{F}_{2^4}$ are Hermitian self-orthogonal. Finally, applying Theorem \ref{Eldearxiv} we get binary stabilizer quantum codes.

Table \ref{Records} shows polynomials $\mathcal{T}(X)$, sets $\Delta$ and parameters of the binary stabilizer quantum codes obtained, proving that, by selecting suitable trace-depending polynomials $\mathcal{T}(X)$, the above procedure produces records with respect to \cite{codetables}. Note that $a$ stands for a primitive element of the field $\mathbb{F}_{2^8}$ and $\mathrm{tr}(X):=\mathrm{tr}_{2n}(X)$ is the polynomial trace defined before Definition \ref{deftraza}. We conclude by explaining that the  sets $\mathfrak{C}^{'}_{i}$ that appear in Table \ref{Records} are some of the $16$-ary cyclotomic cosets over the set $\mathfrak{G} = \{0, 1, \ldots, 254\}$ considered in Subsection \ref{binary}.

\begin{table}[ht]
\centering
\begin{tabular}{||c|c|c|c||}
  \hline
  $\mathcal{T}(X)$ & $m$& $\Delta$ & Parameters $[[n, k, \geq d]]_2$\\ \hline \hline
  $1 + \mathrm{tr}(a^5 X^3)$ & $120$ & $\Delta_1= \cup_{i=0}^{5} \mathfrak{C}^{'}_{i}$ & $[[240,196, \geq 7]]_2 $\\ \hline
  $1 + \mathrm{tr}(a^5 X^3)$ & $120$ & $\Delta_2= \cup_{i=0}^{6} \mathfrak{C}^{'}_{i}$ & $[[240,188, \geq 8]]_2 $\\ \hline
  $1 + \mathrm{tr}(a^5 X^3)$ & $120$ & $\Delta_3=\cup_{i=0}^{7} \mathfrak{C}^{'}_{i}$ & $[[240,180, \geq 9]]_2 $\\ \hline
  $1 + \mathrm{tr}(a^5 X^3)$ & $120$ & $\Delta_4= \cup_{i=0}^{8} \mathfrak{C}^{'}_{i}$ & $[[240,172, \geq 10]]_2 $\\ \hline
  $1 + \mathrm{tr}(a^5 X^3)$ & $120$ & $\Delta_5= \cup_{i=0}^{9} \mathfrak{C}^{'}_{i}$ & $[[240,164, \geq 11]]_2 $\\ \hline
  $1 + \mathrm{tr}(a^5 X^3)$ & $120$ & $\Delta_6=\cup_{i=0}^{10} \mathfrak{C}^{'}_{i}$ & $[[240,156, \geq 12]]_2 $\\ \hline
  $1 + \mathrm{tr}(a^5 X^5)$ & $96$ & $\Delta_{7}=\cup_{i=0}^{11} \mathfrak{C}^{'}_{i}$ & $[[192,132, \geq 9]]_2 $\\ \hline
  $1 + \mathrm{tr}(a^5 X^5)$ & $96$ & $\Delta_{8}=\cup_{i=0}^{8} \mathfrak{C}^{'}_{i}$ & $[[192,124, \geq 10]]_2 $\\ \hline
  $1 + \mathrm{tr}(a X^{19} + X^{10})$ & $116$ & $\Delta_{9}= \cup_{i=0}^{7} \mathfrak{C}^{'}_{i}$ & $[[232,172, \geq 9]]_2 $\\ \hline
  $1 + \mathrm{tr}(a X^{19} + X^{10}) $ & $116$ & $\Delta_{10}= \cup_{i=0}^{8} \mathfrak{C}^{'}_{i}$ & $[[232,164, \geq 10]]_2 $\\ \hline
  $1 + \mathrm{tr}(a X^{19} + X^{10})$ & $116$ & $\Delta_{11}= \cup_{i=0}^{9} \mathfrak{C}^{'}_{i}$ & $[[232,156, \geq 11]]_2 $\\ \hline
  $1 + \mathrm{tr}(a X^{19} + X^{10})$ & $116$ & $\Delta_{12}=\cup_{i=0}^{10} \mathfrak{C}^{'}_{i}$ & $[[232,148, \geq 12]]_2 $\\ \hline
  $1 + \mathrm{tr}(a^3 X^{9} + X)$ & $112$ & $\Delta_{13}=\cup_{i=0}^{7} \mathfrak{C}^{'}_{i}$ & $[[224,164, \geq 9]]_2 $\\ \hline
  $1 + \mathrm{tr}(a^3 X^{9} + X)$ & $112$ & $\Delta_{14}=\cup_{i=0}^{8} \mathfrak{C}^{'}_{i}$ & $[[224,156, \geq 10]]_2 $\\ \hline
  $1 + \mathrm{tr}(a^3 X^{9} + X)$ & $112$ & $\Delta_{15}=\cup_{i=0}^{9} \mathfrak{C}^{'}_{i}$ & $[[224,148, \geq 11]]_2 $\\ \hline
  $1 + \mathrm{tr}(a^8 X^{25} + X^{10})$ & $100$ & $\Delta_{16}= \cup_{i=0}^{7} \mathfrak{C}^{'}_{i}$ & $[[200,140, \geq 9]]_2 $\\ \hline
  $1 + \mathrm{tr}(a^8 X^{25} + X^{10})$ & $112$ & $\Delta_{17}= \cup_{i=0}^{8} \mathfrak{C}^{'}_{i}$ & $[[200,132, \geq 10]]_2 $\\ \hline
  $1 + \mathrm{tr}(a^{17} X^{3} + X^{13})$ & $104$ & $\Delta_{18}= \cup_{i=0}^{7} \mathfrak{C}^{'}_{i}$ & $[[208,148, \geq 9]]_2 $\\ \hline
  $1 + \mathrm{tr}(a^{17} X^{3} + X^{13})$ & $104$ & $\Delta_{19}= \cup_{i=0}^{8} \mathfrak{C}^{'}_{i}$ & $[[208,140, \geq 10]]_2 $\\ \hline
  $1 + \mathrm{tr}(a^{17} X^{3} + X^{13})$ & $104$ & $\Delta_{20}= \cup_{i=0}^{9} \mathfrak{C}^{'}_{i}$ & $[[208,132, \geq 11]]_2 $\\ \hline
\end{tabular}
\caption{Binary quantum error-correcting records}
\label{Records}
\end{table}

\newpage

\bibliographystyle{plain}
\bibliography{biblioEA}

\end{document}